\documentclass[a4paper,11pt]{article}

\usepackage[T1]{fontenc}
\usepackage[utf8]{inputenc}
\usepackage[english]{babel}
\usepackage[normalem]{ulem} 

\usepackage{geometry}
\usepackage{amsmath, amssymb, amsfonts, bm}
\usepackage{mathalfa}
\usepackage{mathtools}

\usepackage{graphicx}
\usepackage{subcaption}
\usepackage{booktabs}
\usepackage{array}
\usepackage{tabularx}
\usepackage{longtable}
\usepackage{threeparttable}
\usepackage{float}
\usepackage{standalone}
\usepackage{tikz}

\usepackage[table]{xcolor}
\definecolor{myblue}{RGB}{0, 114, 178}  
\definecolor{myred}{RGB}{213, 94, 0}  
\definecolor{mygray}{RGB}{80, 80, 80}

\usepackage{algorithm}
\usepackage{algpseudocode}

\usepackage{hyperref}
\usepackage[round,longnamesfirst]{natbib}
\usepackage{xspace}
\usepackage{authblk}
\usepackage{comment}
\usepackage{pdflscape}
\usepackage{rotating}
\usepackage{appendix}
\usepackage{subfiles}

\usepackage{amsthm}
\newtheorem{assumption}{Assumption}
\newtheorem{theorem}{Theorem}

\newtheorem{proposition}{Proposition}

\begin{document}

\title{\Large{\textbf{Group-Level Treatment Effect Heterogeneity in Difference-in-Differences: A Balanced Approach}}}

\date{\today}
\author{Nora Bearth\footnote{ Swiss Institute of Empirical Economic Research, University of St. Gallen.}, \ Nadja van 't Hoff\footnote{Amsterdam School of Economics, University of Amsterdam.}, and \ Torben S. D. Johansen\footnote{Department of Economics, University of Southern Denmark}}
\maketitle

\begin{abstract} \noindent
Understanding how treatment effects vary across groups is central to policy evaluation. In Difference-in-Differences designs, heterogeneity is often studied using subgroup or triple-difference analyses, which can suffer from conservative inference, reliance on parametric interaction structures, and sensitivity to differences in covariate distributions across groups. We propose the \textit{Balanced Group Average Treatment Effect on the Treated} (BGATT), a new estimand that isolates heterogeneity in treatment responses from differences in covariate composition and is identified under standard conditional parallel-trends assumptions. BGATT provides a transparent target for comparing group-specific treatment effects. We derive an influence-function representation and develop estimators that are $\sqrt{n}$-consistent and asymptotically normal under flexible machine-learning estimation of high-dimensional nuisance components, enabling valid inference on both group-specific effects and differences across groups. Simulation evidence shows favorable finite-sample performance.
\bigskip 
 
{\small \noindent \textbf{Keywords:} Causal machine learning, difference-in-differences, double/debiased machine learning, heterogeneity analysis.} \bigskip 
\newline
{\small \noindent \textbf{JEL classification: C14, C21.} \quad }
\end{abstract}

\vfill

{\small \renewcommand{\thefootnote}{\arabic{footnote}} %
\setcounter{footnote}{0} \pagebreak \setcounter{footnote}{0} \pagebreak %
\setcounter{page}{1} }

% --------------------------------------------------
% Introduction
% --------------------------------------------------
\section{Introduction}
\label{sec:introduction}

Difference-in-differences (DiD) provides a widely used framework for causal inference in policy evaluation. Under a parallel-trends assumption, DiD identifies average treatment effects by comparing outcome changes for treated and comparison units. In many applications, however, average effects alone are insufficient: researchers are often interested in how treatment effects vary with observable characteristics. Such heterogeneity is central for welfare analysis and policy design, for example when treatment effects may differ between men and women, or along dimensions such as education and income.

Existing approaches to studying heterogeneous effects in DiD designs face important limitations. In applied work, researchers typically rely either on subgroup analyses or on interaction-based specifications such as triple differences. Subgroup analyses are intuitive, but conditioning on subgroup membership reduces effective sample sizes, leading to imprecise subgroup-specific estimates and low-powered tests of differences across groups. Moreover, when covariate distributions differ across subgroups, focusing on conditional ATTs along one or two covariate dimensions can conflate the targeted heterogeneity with other interactive effects, underscoring the need for more systematic methodological guidance \citep{baker2025difference}. As a result, conventional subgroup comparisons do not cleanly separate heterogeneity in treatment responses from heterogeneity in the characteristics of treated units.

Triple-difference specifications offer a regression-based alternative, but their coefficients need not coincide with the heterogeneity parameter of interest without additional parametric restrictions \citep{strezhnev2023decomposing}. In flexible DiD settings, interaction-based estimators may implicitly aggregate underlying group-time treatment effects using non-convex, and sometimes negative, weights, making the resulting estimand difficult to interpret. These concerns are compounded when (i) heterogeneity is studied along multiple characteristics or (ii) parallel trends holds only conditional on covariates. In such cases, regression-based implementations require interactions between subgroup indicators and controls, and often higher-order interactions among controls, to avoid restrictive functional-form assumptions. The resulting specifications can easily become high-dimensional, increasing sample-size requirements and limiting practical applicability.

These limitations motivate methods that (i) deliver interpretable heterogeneity parameters tied to explicit causal or descriptive contrasts, (ii) separate variation in treatment responses from differences in covariate distributions across groups, (iii) allow for flexible, high-dimensional covariate adjustment under conditional parallel trends, and (iv) avoid the weighting pathologies associated with traditional two-way fixed-effects and triple-difference specifications. This paper develops such a framework for DiD designs.

We develop an approach to analyzing treatment effect heterogeneity in DiD that explicitly separates treatment-response heterogeneity from differences in covariate composition across groups. We introduce the \textit{Balanced Group Average Treatment Effect on the Treated} (BGATT), a group-specific estimand that recovers treatment effects after balancing the distribution of selected covariates across groups. By construction, BGATT evaluates the average treatment effect for each group under a common target distribution of a subset of covariates $W$, which we take as the marginal distribution of $W$ among the treated. Differences in BGATTs across groups, which we term the \textit{difference in BGATTs} (DiBGATT), then quantify how treatment effects differ across groups once their covariate composition has been equalized. 

Our first contribution is conceptual. We formalize BGATT and DiBGATT in a two-period DiD setting under standard conditional parallel-trends, no-anticipation, overlap, and exogeneity assumptions, and show that BGATT is point-identified from observed data. We then decompose differences in group-specific ATTs into a component that reflects genuine differences in treatment responses and a component driven by differences in covariate composition. This decomposition is analogous to the classic Kitagawa–Oaxaca–Blinder decomposition, but adapted to a potential-outcomes DiD framework and made symmetric by using the marginal treated distribution of covariates as a common reference. This clarifies what part of observed group gaps is due to treatment-effect heterogeneity versus compositional differences.
We discuss the additional assumptions required for DiBGATT to admit a causal interpretation as an effect of the moderator $Z$ itself; in the baseline analysis, we view DiBGATT primarily as a balanced group contrast in treatment effects.

Our second contribution is methodological. We derive the semiparametric efficient influence function for BGATT and use it to construct estimators that remain valid when nuisance components—such as outcome regressions, treatment probabilities, and group-assignment probabilities—are estimated flexibly using modern machine-learning methods. The influence-function representation extends doubly robust DiD scores to the balanced-group setting by combining (i) residualized DiD terms that align treated and comparison units within groups, and (ii) additional reweighting terms that enforce covariate balance with respect to the target distribution of $W$. This structure ensures Neyman orthogonality, so that small regularization errors in the nuisance estimates have only second-order effects on the target parameter. Under high-level product-rate conditions on the nuisance estimators, we show that the resulting BGATT and DiBGATT estimators are $\sqrt{n}$-consistent, asymptotically normal, and attain the semiparametric efficiency bound. The influence-function structure further implies a novel ``cross-double robust’’ property: consistency obtains if, for each of two key pairs of nuisance components, at least one element in the pair is consistently estimated.

Our third contribution is to extend the BGATT framework to multiple time periods with staggered treatment adoption. Following \citet{callaway2021difference}, we define group-time BGATT parameters that measure the effect of first receiving treatment in period $g$ on outcomes in period $t$ for each group $Z=z$, again evaluated under a common balancing distribution of $W$. We show that these group-time BGATTs are identified under generalized conditional parallel trends, no anticipation, overlap, and exogeneity, and we derive their influence functions. The group-time effects can be aggregated into familiar summaries—such as event-study profiles, group-specific averages, calendar-time effects, and overall averages—using standard weighting schemes. We develop asymptotic theory for the joint distribution of group-time and aggregated BGATTs and use a multiplier bootstrap to obtain valid simultaneous confidence bands.

The remainder of the paper is organized as follows. Section \ref{sec:literature} provides overview of related literature. Section \ref{sec:setup} introduces the setup, notation, and formal definition of BGATT. Section \ref{sec:identification_decomposition} establishes identification, presents the decomposition of BGATT, and clarifies its interpretation. Section \ref{sec:influence_function_estimation} derives the influence-function representation, develops the estimator and presents the asymptotic theory and inference results. Section \ref{sec:extension_multiple_periods} extends the results to multiple time periods and staggered treatment adoption. Section \ref{sec:simulations} reports simulation evidence on finite-sample performance. Section \ref{sec:application} applies the proposed methods in an application. Section \ref{sec:conclusion} concludes.

\bigskip

% --------------------------------------------------
% Literature Review
% --------------------------------------------------
\section{Literature Review}
\label{sec:literature}

This paper relates to three strands of econometric research: semiparametric and doubly robust methods for Difference-in-Differences, machine-learning approaches to treatment-effect heterogeneity, and recent work on triple-difference designs.

A large literature studies DiD designs with covariates under conditional parallel trends. \citet{abadie2005semiparametric} provides an early semiparametric formulation of DiD with covariate adjustment. More recent contributions develop efficient and doubly robust estimators for average treatment effects in DiD settings, including estimators that accommodate flexible or high-dimensional nuisance-function estimation \citep{chang2020double,zimmert2018efficient,sant2020doubly,callaway2021difference,nie2021nonparametric}. Relatedly, \citet{huber2026difference} use double machine-learning tools to study mediation effects in DiD settings, while we study moderation effects. These papers substantially broaden the scope of DiD methods by allowing researchers to adjust for covariates and, in staggered-adoption designs, to recover interpretable group-time average treatment effects. Our focus is complementary: rather than treating covariates only as controls for identification or precision, we study how covariate distributions enter the interpretation of group-level treatment-effect comparisons.

For estimation, we derive the efficient influence function for the proposed balanced group treatment-effect parameters and use it to construct doubly robust, Neyman-orthogonal estimating equations that accommodate flexible nuisance estimation (e.g., see \cite{kennedy2024semiparametric}). Unlike much of the recent doubly robust DiD literature, which typically starts from regression, inverse-probability-weighted, or orthogonal scores for standard DiD parameters, our starting point is the semiparametric representation of a new balanced estimand. 

Our paper is also related to work on causal machine learning and conditional treatment-effect estimation. In cross-sectional settings, causal forests, generic machine-learning approaches to heterogeneous treatment effects, and debiased machine-learning methods have been developed to estimate conditional or individualized causal effects \citep{wager2018estimation,athey2019generalized,chernozhukov2018generic}. Related ideas have recently been adapted to DiD settings, where the goal is to estimate conditional treatment effects under conditional parallel trends \citep{gavrilova2023dynamic,hatamyar2023machine,imai2026doubly}. These methods are useful for describing heterogeneity as a function of observed covariates, identifying units or covariate profiles for which effects are large, and constructing conditional group-time treatment-effect functions. 

Group comparisons raise a distinct issue. A raw difference in group-specific ATTs reflects both differences in treatment responses and differences in the covariate distributions of the groups being compared. When treatment effects vary with covariates, a larger ATT may arise because group status moderates the treatment response or because the group contains more units with covariate profiles associated with larger effects. Conditional treatment-effect methods describe how effects vary with covariates, but they do not by themselves define a group-level comparison under a common covariate distribution. This distinction is closely related to the balanced causal machine-learning framework of \citet{bearth2024causal}, who study balanced group average treatment effects in cross-sectional settings.\footnote{See also the large literature on group and conditional treatment effects under unconfoundedness, which studies heterogeneity using balancing and orthogonalization tools in cross-sectional settings: \cite{abrevaya2015estimating, athey2019generalized, lechner2018modified, semenova2021debiased, zimmert2019nonparametric, fan2022estimation, jacob2019group}} We extend this balancing logic to DiD designs by comparing group-specific treatment effects after equalizing the distribution of observed covariates.

The paper also contributes to the literature on triple-difference designs. Triple differences have a long history in applied work, with \citet{gruber1994incidence} providing an early and influential example. Recent methodological work has clarified the identifying assumptions, estimands, and weighting properties of DDD estimators \citep{olden2022triple,strezhnev2023decomposing,ortiz2025better,caron2025triple}. \citet{strezhnev2023decomposing} shows that conventional triple-difference regressions can aggregate underlying comparisons with difficult-to-interpret weights under staggered adoption and heterogeneous effects.\footnote{This is similar to issues found with TWFE models, see for example \cite{goodman2021difference, de2020two}.} \citet{ortiz2025better} develop covariate-adjusted DDD estimators for designs in which treatment effectively requires satisfying two criteria, so that the third dimension is structural to the treatment assignment mechanism. In our setting, by contrast, the group indicator is an analytical dimension chosen to study heterogeneity across substantively meaningful groups.

The closest paper to ours is \citet{caron2025triple}, who studies triple-difference designs with heterogeneous treatment effects and distinguishes between the difference in average treatment effects on the treated across subgroups (DATT) and a causal difference in average treatment effects on the treated (CDATT). Her framework clarifies when a difference in subgroup-specific ATTs can be interpreted causally as the effect of subgroup status itself. CDATT requires assumptions about counterfactual subgroup membership and potential outcomes under hypothetical reassignment of subgroup status. This is natural in some applications, but can be conceptually demanding or inappropriate when subgroup status is immutable or not meaningfully manipulable, such as gender, race, or other demographic characteristics.

Our DiBGATT answers a different question. Whereas \citeauthor{caron2025triple}'s (\citeyear{caron2025triple}) CDATT asks whether differences in treatment effects can be causally attributed to subgroup status, DiBGATT asks whether a treatment-effect gap remains after placing groups on a common covariate distribution. Rather than requiring a causal interpretation of subgroup membership itself, it evaluates group-specific treatment effects under the same covariate distribution and distinguishes the portion of an observed ATT gap that is driven by differences in covariate composition from the portion that remains after balancing. Even when the covariates used for identification coincide with the covariates used for balancing, our estimand generally differs from \citeauthor{caron2025triple}'s (\citeyear{caron2025triple}) CDATT because the two parameters use different reference populations and weighting schemes.

In addition, \citet{caron2025triple} does not develop a high-dimensional nuisance-estimation framework with cross-fitting, whereas our influence-function-based estimator accommodates flexible machine-learning methods for the nuisance components.

Taken together, these distinctions show that covariate adjustment and covariate balancing play different roles in group-level DiD comparisons. This paper fills this gap by introducing BGATT and DiBGATT as balanced group treatment-effect parameters and by developing doubly robust estimators that combine DiD adjustment with the reweighting needed to compare groups under a common covariate distribution.

\bigskip

% --------------------------------------------------
% Setup, Notation, and Definition of BGATT
% --------------------------------------------------
\section{Setup and Definition of BGATT}
\label{sec:setup}

This section lays out the basic two-period Difference-in-Differences framework, notation, and identifying assumptions. We then define group-specific ATTs, introduce the BGATT as the group effect evaluated under a common covariate distribution, and define its difference across groups (DiBGATT) as our main heterogeneity parameter.

\medskip

\subsection{Setup and notation}

Consider the classical difference-in-differences (DiD) setting with two time periods: a pre-treatment period ($t=0$) and a post-treatment period ($t=1$). In later sections, we extend our analysis to cross-sectional settings and to multiple time periods with staggered treatment adoption.

We observe panel data on $N$ independent units indexed by $i=1,\dots,N$. For each unit, we observe the random vector
\[
O_i = (Y_i, D_i, Z_i, X_i),
\]
where $Y_i = (Y_{i0}, Y_{i1})$ denotes outcomes over time. The treatment indicator $D_{it} \in \{0,1\}$ equals one if unit $i$ has received treatment by time $t$, and zero otherwise. We assume that no unit is treated in the pre-treatment period, so that $D_{i0}=0$ for all $i$, and define $D_i \equiv D_{i1}$ as the post-treatment treatment indicator.

The variable $Z_i \in \{0,1\}$ denotes a binary moderator measured prior to treatment, and $X_i$ is a vector of pre-treatment covariates. We partition $X_i=(W_i,V_i)$, where $W_i$ denotes the covariates whose distribution is balanced across values of $Z_i$, and $V_i$ contains additional covariates used for adjustment.

We adopt the potential outcomes framework (see, e.g., \cite{rubin1974estimating} and \cite{robins1986a}). Let $Y_{it}(0)$ denote the potential outcome for unit $i$ at time $t$ in the absence of treatment, and let $Y_{it}(1)$ denote the potential outcome under treatment. Under the Stable Unit Treatment Value Assumption (SUTVA) and consistency, the observed outcome satisfies
\[
Y_{it} = D_{it} Y_{it}(1) + (1-D_{it})Y_{it}(0), \quad t \in \{0,1\}.
\]

For completeness, we define potential values for the moderator and covariates, denoted $(Z_{i,t}(0), Z_{i,t}(1))$ and $(X_{i,t}(0), X_{i,t}(1))$, respectively, although our analysis conditions only on their pre-treatment realizations.
For notational convenience, we omit the subscript $i$ throughout the paper.

Due to the fundamental problem of causal inference, we can only observe either $Y_{t}(1)$ or $Y_{t}(0)$ for $t=1$, but never both. Consequently, the individual treatment effect on the treated, $Y_{1}(1) - Y_{1}(0)$, cannot be directly estimated. Instead, in the DiD framework, the primary parameter of interest is the average treatment effect on the treated (ATT), given by $\tau = E[Y_1(1) - Y_1(0) | D=1]$. In the conditional DiD setting, identification of this effect relies on the following assumptions: \medskip

\begin{assumption}[Conditional Parallel Trends.]\label{ass:pta} 
$E[Y_1(0)-Y_0(0)|D=1,X, Z]=E[Y_1(0)-Y_0(0)|D=0,X, Z]$ a.s.. 
\end{assumption}

\begin{assumption}[No anticipation assumption]
\label{ass:no_anticipation} 
$E[Y_0(0)|D=1,X,Z]=E[Y_0(1)|D=1,X,Z]$. 
\end{assumption}

\begin{assumption}[Strong Overlap]
\label{ass:overlap}
For each $z\in\{0,1\}$, there exists $\varepsilon>0$ such that $P(D=1)>\varepsilon$, $P(D=1\mid X,Z=z) \leq 1-\varepsilon$ almost surely, and $\varepsilon \leq P(Z=z\mid D=1,W) \leq 1-\varepsilon$ almost surely.
\end{assumption}

\begin{assumption}[Exogeneity]
\label{ass:exogeneity} 
$X(1)=X(0)=X$ and $Z(1)=Z(0)=Z$. 
\end{assumption}

Assumption \ref{ass:pta} imposes conditional parallel trends within cells defined jointly  by covariates $X$ and the moderator $Z$, requiring that the counterfactual trend for the treated would have equaled that of the untreated conditional on $(X, Z)$.   
Assumption \ref{ass:no_anticipation} concerns the pre-treatment outcome of the treated group, ensuring comparability necessary for identification.
Assumption \ref{ass:overlap} ensures that the treatment and control group have overlapping characteristics, preventing perfect separation between the two groups. 
Assumption \ref{ass:exogeneity} imposes the exogeneity of covariates $X$ and the moderator $Z$, meaning that they are measured prior to treatment and that they are not influenced by the treatment assignment $D$.

Throughout, we use subscripts to denote conditioning sets in probability densities and mass functions. In particular, we write $p_{A \mid B}(a) \equiv p(A = a \mid B)$,
so that, for example, $p_{X \mid D = 1, Z = z, W = w}(x)$
denotes the conditional density of \(X\) given \(D=1\), \(Z=z\), and \(W=w\). For discrete variables, we similarly write probability mass functions using subscripts, e.g.,
\( p_D(1) \equiv \Pr(D = 1) \).
Further define $m_1(X,z)=E[Y_1-Y_0|D=1,X,Z=z]$ and $m_0(X,z)=E[Y_1-Y_0|D=0,X,Z=z]$.

\bigskip

\subsection{Definition of BGATT and DiBGATT}
\label{sec:bgatt_definition}

In many empirical applications, researchers seek to examine how the treatment effect varies across different subpopulations rather than focusing solely on the overall treated group. This motivates the analysis of heterogeneity in the ATT with respect to some or all covariates $X$. A useful distinction in this context is between the individualized ATT (IATT) and the group ATT (GATT). The IATT captures the average effect on the treated for specific values of all covariates:
\begin{align*}
    \tau_1(x,z) = E[Y_1(1)-Y_1(0)|D=1,Z=z,X=x].
\end{align*}
The GATT gives the average effect on the treated across different covariate-defined groups:
\begin{align*}
    \gamma^G(z) &= E[Y_1(1)- Y_1(0)|D=1, Z=z] \\
    &= E[\tau_1(X,z)|D=1, Z=z]
\end{align*}

As noted by several authors, including \cite{bearth2024causal}, such parameters can be challenging to interpret due to differences in the distribution of the remaining covariates. Accounting for this heterogeneity is crucial for meaningful comparisons. 
To address this issue, we introduce a covariate-balanced group ATT (BGATT), which balances the distribution of $W$ across values of $Z$ using the treated population as the target distribution. Specifically, for $z\in\{0,1\}$,
\begin{align*}
    \gamma^B(z)
    &= E_{W\mid D=1}
    \left[E[Y_1(1)-Y_1(0)\mid D=1,Z=z,W]\right] \\
    &= E_{W\mid D=1}
    \left[ E[\tau_1(X,z)\mid D=1,Z=z,W] \right].
\end{align*}

A natural next step is to compare BGATTs across groups. For the binary moderator (Z), this comparison is given by the difference in BGATTs, the DiBGATT,
\begin{align*}
    \gamma^{\Delta B} =\gamma^B(1)-\gamma^B(0),
\end{align*}
which preserves the common covariate distribution used in the BGATT construction. DiBGATT is designed to answer a simple but central question: do treatment effects differ across groups once differences in covariate composition are held fixed? Standard comparisons of group-specific ATTs conflate treatment-response heterogeneity with differences in who is treated, making it unclear whether observed gaps reflect genuine effect heterogeneity or compositional differences. By equalizing covariate distributions across groups before comparing treatment effects, DiBGATT isolates heterogeneity in treatment responses rather than differences in baseline characteristics. This makes DiBGATT a transparent and policy-relevant estimand for assessing group-level heterogeneity in Difference-in-Differences designs. 

\bigskip

% --------------------------------------------------
% Identification, Decomposition, and Comparison to Existing DiD Estimands
% --------------------------------------------------
\section{Identification, Decomposition and Interpretation}
\label{sec:identification_decomposition}

This section introduces the main causal parameters studied in the paper and establishes their interpretation. We begin by defining the Balanced Group Average Treatment Effect on the Treated (BGATT), which evaluates group-specific treatment effects under a common distribution of balancing covariates. We then show how BGATT is identified. Building on this result, we define the difference in BGATTs (DiBGATT) and decompose the observed difference in group-specific ATTs into a balanced treatment-effect component and a compositional component. The section concludes by clarifying the interpretation of DiBGATT.

\medskip

\subsection{Identification}
\label{sec:identification}

Proposition \ref{prop:identification_bgatt} formalizes the BGATT parameter and establishes an identification result that will be essential for estimation and inference. \\

\begin{proposition}[Identification of BGATT] \label{prop:identification_bgatt}
Suppose Assumptions \ref{ass:pta}, \ref{ass:no_anticipation}, and \ref{ass:exogeneity} hold. Let the target distribution for balancing be \(F_{W\mid D=1}\). Then, for \(z\in\mathcal{Z}\), the balanced group average treatment effect on the treated is
\[\gamma^B(z)=E\!\left[E\!\left[Y_1(1)-Y_1(0)\mid D=1,Z=z,W\right]\mid D=1\right],\]
and is identified by
\[\gamma^B(z)=E\!\left[E\!\left[m_1(X,z)-m_0(X,z)\mid D=1,Z=z,W\right]\mid D=1\right],\]
where
\[m_d(x,z)=E[Y_1-Y_0\mid D=d,Z=z,X=x].\]
Equivalently,
\[m_d(x,z)=\mu_{d,1}(z,x)-\mu_{d,0}(z,x),\qquad\mu_{d,t}(z,x)=E[Y_t\mid D=d,Z=z,X=x].\]
\end{proposition}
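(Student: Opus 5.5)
The plan is to reduce the claim to a pointwise identification of the conditional effect $\tau_1(x,z)=E[Y_1(1)-Y_1(0)\mid D=1,Z=z,X=x]$ by $m_1(x,z)-m_0(x,z)$. Once that is in hand, we integrate it over $X$ given $(D=1,Z=z,W)$ and then over $W$ given $D=1$. The first displayed equality in Proposition \ref{prop:identification_bgatt} is just the definition of $\gamma^B(z)$ from Section \ref{sec:bgatt_definition}, written with the outer expectation taken with respect to $F_{W\mid D=1}$. So all the work lies in the inner conditional expectation.

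\textbf{Step 1 (pointwise identification).} Fix $(x,z)$ in the support of $(X,Z)\mid D=1$. We write
\[
\tau_1(x,z)=E[Y_1(1)-Y_0(1)\mid D=1,Z=z,X=x]+E[Y_0(1)-Y_1(0)\mid D=1,Z=z,X=x].
\]
By Assumption \ref{ass:no_anticipation}, we replace $Y_0(1)$ by $Y_0(0)$ in the second term. By Assumption \ref{ass:exogeneity}, conditioning on the realized $(X,Z)$ coincides with conditioning on the potential covariates, so no selection through post-treatment covariates enters. Using consistency $Y_t=D_tY_t(1)+(1-D_t)Y_t(0)$ with $D_0=0$ and $D_1=D$, on $\{D=1\}$ we have $Y_1=Y_1(1)$ and $Y_0=Y_0(0)$. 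Hence
\[
\tau_1(x,z)=m_1(x,z)-E[Y_1(0)-Y_0(0)\mid D=1,Z=z,X=x].
\]
Assumption \ref{ass:pta} replaces the last conditional expectation by the same quantity conditional on $D=0$. On $\{D=0\}$, $Y_t=Y_t(0)$ for both $t$, so this equals $m_0(x,z)$. Therefore $\tau_1(x,z)=m_1(x,z)-m_0(x,z)$ almost surely with respect to the law of $(X,Z)\mid D=1$.

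\textbf{Step 2 (integration).} Since $W$ is a sub-vector of $X$, the tower property gives $E[Y_1(1)-Y_1(0)\mid D=1,Z=z,W]=E[\tau_1(X,z)\mid D=1,Z=z,W]$. Substituting Step 1 yields $E[m_1(X,z)-m_0(X,z)\mid D=1,Z=z,W]$. Taking the expectation over $W\mid D=1$ gives the claimed formula. The representation $m_d=\mu_{d,1}-\mu_{d,0}$ is linearity of conditional expectation.

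\textbf{Main obstacle.} The delicate step is well-definedness rather than the algebra. First, $m_0(x,z)$ must be defined on the support of $X\mid D=1,Z=z$. Second, the inner expectation conditional on $(D=1,Z=z,W=w)$ must be defined for $F_{W\mid D=1}$-almost every $w$, including $w$ that are common among treated units with $Z=1-z$. These are exactly the overlap conditions $P(D=1\mid X,Z=z)\le 1-\varepsilon$ and $\varepsilon\le P(Z=z\mid D=1,W)\le 1-\varepsilon$ in Assumption \ref{ass:overlap}. The proposition does not list that assumption, so I will either invoke it explicitly, or state that the identification holds on the region where these conditional laws are defined and note that overlap is what makes the integrals meaningful. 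I will also handle the almost-sure qualifiers in Assumption \ref{ass:pta} by arguing up to null sets of the treated covariate distribution.
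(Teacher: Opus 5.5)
Your proposal is correct and follows essentially the same route as the paper's proof in Appendix~\ref{app:proof_identification}. That route has three steps: iterate expectations down to $(D=1,Z,X)$; use parallel trends together with no anticipation and consistency to replace $E[Y_1(1)-Y_1(0)\mid D=1,X,Z]$ by $m_1(X,z)-m_0(X,z)$; then integrate back over $X\mid D=1,Z=z,W$ and $W\mid D=1$. Your explicit appeal to Assumption~\ref{ass:overlap}, which the proposition omits but which is needed for the conditional expectations to be defined, is a worthwhile addition. There is one small bookkeeping slip: after adding and subtracting $Y_0(1)$, you must also apply no anticipation in the first term, because consistency gives $Y_0=Y_0(0)$ (not $Y_0(1)$) on $\{D=1\}$; more simply, add and subtract $Y_0(0)$ directly.
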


\begin{proof}
See Appendix \ref{app:proof_identification}.
\end{proof} 

Because of linearity, it follows that the difference in BGATT (DiBGATT or $\Delta$BGATT) can be written as
\begin{align*}
    & \gamma^{\Delta B} = \gamma^B(1) - \gamma^B(0) \\
    &= E\!\left[ E\!\left[Y_1(1)-Y_1(0)\mid D=1,Z=1,W\right] - E\!\left[Y_1(1)-Y_1(0)\mid D=1,Z=0,W\right] \mid D=1 \right] \\
    &= E\!\left[E\!\left[m_1(1,X)-m_0(1,X)\mid D=1,Z=1,W\right] -E\!\left[m_1(0,X)-m_0(0,X)\mid D=1,Z=0,W\right]\mid D=1\right].
\end{align*}

\bigskip

\subsection{Decomposition}
\label{sec:decomposition}

We now derive a decomposition of $\gamma^{\Delta G} \equiv \gamma^{G}(1) - \gamma^{G}(0)$, the difference in group average treatment effects on the treated across $Z = 1$ and $Z = 0$.\footnote{A similar symmetric decomposition for cross-sectional data under unconfoundedness is derived by \cite{bearth2024causal}.} \\

\begin{proposition}[Decomposition of Difference in GATT] \label{prop:decomposition_bgatt}
$$\gamma^{\Delta G} = \gamma^{\Delta B} + \mathcal{C}_1 + \mathcal{C}_2,$$
\textit{where}
$$\gamma^{\Delta B} \;\equiv\; E\!\left[E[Y_1(1) - Y_1(0) \mid W, Z = 1] \mid D = 1\right] - E\!\left[E[Y_1(1) - Y_1(0) \mid W, Z = 0] \mid D = 1\right],$$
$$\mathcal{C}_1 \;\equiv\; E\!\left[E[Y_1(1) - Y_1(0) \mid W, Z = 1] \mid D = 1, Z = 1\right] - E\!\left[E[Y_1(1) - Y_1(0) \mid W, Z = 1] \mid D = 1\right],$$
$$\mathcal{C}_2 \;\equiv\; E\!\left[E[Y_1(1) - Y_1(0) \mid W, Z = 0] \mid D = 1\right] - E\!\left[E[Y_1(1) - Y_1(0) \mid W, Z = 0] \mid D = 1, Z = 0\right].$$ 
\end{proposition}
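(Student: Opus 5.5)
This is an add-and-subtract identity, so the plan is to rewrite each group ATT in terms of the inner conditional expectations that appear in $\gamma^{\Delta B}$, $\mathcal{C}_1$ and $\mathcal{C}_2$, and then telescope. Throughout, the inner conditional expectations $E[Y_1(1)-Y_1(0)\mid W, Z=z]$ are read as conditioning also on $D=1$, as in the definition of $\gamma^B(z)$ in Section \ref{sec:bgatt_definition}. Write
\[
g_z(w) \equiv E[Y_1(1)-Y_1(0)\mid D=1, Z=z, W=w].
\]
With this notation, $\gamma^B(z)=E[g_z(W)\mid D=1]$, so $\gamma^{\Delta B}=E[g_1(W)\mid D=1]-E[g_0(W)\mid D=1]$.

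The first step expresses each GATT through $g_z$. By the law of iterated expectations, conditioning on $W$ inside the event $\{D=1,Z=z\}$ gives
\[
\gamma^G(z)=E[Y_1(1)-Y_1(0)\mid D=1,Z=z]=E\bigl[g_z(W)\mid D=1,Z=z\bigr].
\]
This is the only substantive step. It requires that $g_z$ be well defined on the support of $W$ given $D=1$, since $\gamma^{\Delta B}$ integrates $g_z$ against $F_{W\mid D=1}$ rather than $F_{W\mid D=1,Z=z}$. The overlap condition $\varepsilon\le P(Z=z\mid D=1,W)\le 1-\varepsilon$ in Assumption \ref{ass:overlap} guarantees this, so I would invoke it explicitly. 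Integrability of $Y_1(1)-Y_1(0)$ is implicitly assumed.

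The second step is to add and subtract the two balanced terms:
\begin{align*}
\gamma^{\Delta G}
&= E[g_1(W)\mid D=1,Z=1]-E[g_0(W)\mid D=1,Z=0]\\
&= \bigl(E[g_1(W)\mid D=1]-E[g_0(W)\mid D=1]\bigr)
+\bigl(E[g_1(W)\mid D=1,Z=1]-E[g_1(W)\mid D=1]\bigr)\\
&\quad+\bigl(E[g_0(W)\mid D=1]-E[g_0(W)\mid D=1,Z=0]\bigr).
\end{align*}
The three brackets are exactly $\gamma^{\Delta B}$, $\mathcal{C}_1$ and $\mathcal{C}_2$ as defined in the statement, which completes the argument. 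The decomposition itself is purely algebraic and uses no identifying assumptions. If desired, one could then substitute Proposition \ref{prop:identification_bgatt} to express each component in observables, but the statement does not require it.

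The main point needing care is the notational one. The statement writes $E[\cdot\mid W,Z=z]$ without $D=1$. If it is read literally as conditioning on the whole population, the first step no longer holds as written. In that case I would either adopt the convention that the inner expectation is over the treated, consistent with the BGATT definition, or note that the identity still holds provided the same $g_z$ is used in every term. The telescoping is valid for any function $g_z$; the only requirement is that $E[g_z(W)\mid D=1,Z=z]=\gamma^G(z)$, and that equality needs the conditioning on $D=1$. So I would state the convention first and then run the telescoping.
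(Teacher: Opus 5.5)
Your proposal is correct and follows the paper's proof. Both apply the law of iterated expectations to write each $\gamma^G(z)$ as $E[g_z(W)\mid D=1,Z=z]$, then add and subtract $E[g_1(W)\mid D=1]$ and $E[g_0(W)\mid D=1]$ and regroup into $\gamma^{\Delta B}$, $\mathcal{C}_1$ and $\mathcal{C}_2$. Your convention that the inner expectations also condition on $D=1$, together with invoking overlap so that your $g_z$ is defined on the support of $W$ given $D=1$, is exactly what the paper's iterated-expectations step tacitly requires; making it explicit adds rigor without changing the argument.
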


\begin{proof} 
Applying the law of iterated expectations to each $GATT(z)$ gives $\gamma^{\Delta G} = E[E[Y_1(1) - Y_1(0) \mid W, Z = 1] \mid D = 1, Z = 1] - E[E[Y_1(1) - Y_1(0) \mid W, Z = 0] \mid D = 1, Z = 0]$. Adding and subtracting $E[E[Y_1(1) - Y_1(0) \mid W, Z = 1] \mid D = 1]$ and $E[E[Y_1(1) - Y_1(0) \mid W, Z = 0] \mid D = 1]$, which average the group-specific conditional treatment effects over the marginal distribution of $W$ among the treated, and rearranging gives the result.
\end{proof}

The three components have distinct interpretations. $\gamma^{\Delta B}$ measures how much the treatment effect differs across groups for a treated individual drawn randomly from the marginal distribution of $W$, holding covariate composition fixed. The terms $\mathcal{C}_1$ and $\mathcal{C}_2$ are compositional effects: $\mathcal{C}_1$ captures how much the covariate distribution of group $Z = 1$ deviates from that of the treated population as a whole, and $\mathcal{C}_2$ the corresponding deviation for $Z = 0$. Both terms vanish if $W \perp Z \mid D = 1$, that is, if group membership is independent of covariates among the treated, or if treatment effects are homogeneous in $W$. Either condition alone is sufficient for $\gamma^{\Delta G} = \gamma^{\Delta B}$.\footnote{Our $\gamma^{\Delta G}$ is equivalent to \citeauthor{caron2025triple}'s (\citeyear{caron2025triple}) DATT.}

This decomposition is in the spirit of \cite{kitagawa1955components}, \cite{oaxaca1973male}, and \cite{blinder1973wage}, who decompose mean outcome gaps between two groups into a component attributable to differences in covariate distributions (the endowment effect) and a component attributable to differences in how covariates relate to outcomes (the returns effect). Here $\gamma^{\Delta B}$ plays the role of the returns effect and $\mathcal{C}_1 + \mathcal{C}_2$ the endowment effect, adapted to the potential outcomes setting.

A related two-term decomposition is proposed by \cite{caron2025triple}, who uses one group's covariate distribution as the reference, yielding a causal component and a treatment effect heterogeneity component.\footnote{A related decomposition in a continuous treatment DiD setting is derived in \cite{callaway2024difference}; see their Theorem 3.2(b).} As discussed in the Kitagawa-Oaxaca-Blinder literature, such two-term decompositions are asymmetric in the choice of reference group \citep{oaxaca1973male, oaxaca1994discrimination}. Proposition \ref{prop:identification_bgatt} resolves this by using the marginal distribution of $W$ of the treated population as a common reference point, yielding a decomposition that is symmetric across the two groups.

The advantage of using the marginal distribution as the reference is perhaps most transparent when $Z$ denotes gender. \citeauthor{caron2025triple}'s (\citeyear{caron2025triple}) decomposition asks, for instance, how much treatment effects differ by gender for a person with the covariate profile of an actual woman, meaning it conflates a genuine gender difference with the fact that women may have different observable characteristics than men. Proposition \ref{prop:identification_bgatt} instead asks how much effects differ for the average treated individual, holding covariate composition fixed across groups. The latter is the more natural quantity when the goal is to assess whether a treatment works differently for men and women per se, rather than as a by-product of compositional differences.

The decomposition in Proposition \ref{prop:decomposition_bgatt} also lends itself directly to estimation. Applying the law of total expectation to \(\mathcal{C}_1\) and \(\mathcal{C}_2\) over \(Z\mid D=1\), together with the identities
\begin{align*}
    & E[E[Y_1(1)-Y_1(0)\mid D=1,W,Z=z]\mid D=1] \\ 
    &= \sum_{z\in\{0,1\}} P(Z=z\mid D=1)E[E[Y_1(1)-Y_1(0)\mid D=1,W,Z=z]\mid D=1,Z=z],
\end{align*}
gives the equivalent representation
\begin{align*}
    \gamma^{\Delta G}
      &= \gamma^{\Delta B} \\
      & \quad + \frac{P(Z = 0 \mid D = 1)}{P(Z = 1 \mid D = 1)}
        \Bigl\{E\!\left[E[Y_1(1) - Y_1(0) \mid W, Z = 1] \mid D = 1\right] \\
      &\qquad - E\!\left[E[Y_1(1) - Y_1(0) \mid W, Z = 1] \mid D = 1, Z = 0\right]\Bigr\} \\
      & \quad - \frac{P(Z = 1 \mid D = 1)}{P(Z = 0 \mid D = 1)}
        \Bigl\{E\!\left[E[Y_1(1) - Y_1(0) \mid W, Z = 0] \mid D = 1\right] \\
      &\qquad - E\!\left[E[Y_1(1) - Y_1(0) \mid W, Z = 0] \mid D = 1, Z = 1\right]\Bigr\}.
\end{align*}

\bigskip

\subsection{Illustrative Example: Group Differences and Covariate Composition}

We illustrate the distinction between an unbalanced group treatment-effect difference and a balanced group treatment-effect difference using a simple example. Let \(Z\in\{0,1\}\) denote group status, with \(Z=1\) for women and \(Z=0\) for men, and let \(W\in\{h,\ell\}\) denote education, with \(h\) indicating high education and \(\ell\) indicating low education. Suppose that treatment effects vary by both gender and education. Among high-education units, the ATT is \(8\) for women and \(5\) for men. Among low-education units, the ATT is \(4\) for women and \(3\) for men. Thus, the gender treatment-effect gap equals \(8-5=3\) among high-education units and \(4-3=1\) among low-education units. At the same time, the treated covariate distributions differ across groups: among treated women, \(75\%\) are highly educated and \(25\%\) have low education, whereas among treated men, \(25\%\) are highly educated and \(75\%\) have low education. The target distribution used for balancing is the covariate distribution among all treated units, which in this example is \(P(W=h\mid D=1)=P(W=\ell\mid D=1)=0.5\).

The unbalanced difference in group-specific ATTs compares women and men under their own treated covariate distributions:
\[
\gamma^{\Delta G}
=
\left(0.75\cdot 8 + 0.25\cdot 4\right)
-
\left(0.25\cdot 5 + 0.75\cdot 3\right)
=
3.5.
\]
This contrast combines differences in treatment responses with differences in the distribution of education across treated women and men.

By contrast, the balanced group treatment-effect difference compares the two groups under the common covariate distribution, which assigns equal weight to high- and low-education units:
\[
\gamma^{\Delta B}
=
\left(0.5\cdot 8 + 0.5\cdot 4\right)
-
\left(0.5\cdot 5 + 0.5\cdot 3\right)
=
2.
\]
The difference between the two contrasts,
\[
\gamma^{\Delta G}-\gamma^{\Delta B}
=
3.5-2
=
1.5,
\]
is due to covariate composition. In this example, treated women are more likely to belong to the high-education group, where the gender treatment-effect gap is larger. The unbalanced group ATT difference therefore exceeds the treatment-effect gap obtained when women and men are compared under the same education distribution.

The example highlights the distinction between conditioning and balancing. Conditioning on education describes how the treatment-effect gap varies with \(W\). Balancing specifies how these conditional gaps are aggregated by choosing a common reference distribution for \(W\). Thus, DiBGATT captures the group treatment-effect difference that remains after equalizing covariate composition, while the difference between DiGATT and DiBGATT captures the contribution of compositional imbalance.

Figure~\ref{fig:DiBGATT_visualization} provides a graphical illustration of DiBGATT as the balanced difference in treatment effects across groups. The black solid lines show the observed evolution of untreated units, while the dashed line represents the counterfactual untreated outcome path for treated units under the parallel-trends assumption. The vertical distances between the treated potential outcomes and this counterfactual path correspond to treatment effects. Within the high-education group, the gender treatment-effect gap is \(8-5=3\), shown by the blue bracket. Within the low-education group, the corresponding gap is \(4-3=1\), shown by the orange bracket. DiBGATT averages these covariate-specific group gaps using the common target distribution of \(W|D=1\). In the numerical example, the target distribution assigns equal weight to high- and low-education units, so the balanced group difference can equivalently be computed as
\[
0.5\cdot 3 + 0.5\cdot 1 = 2.
\]
This is the same value obtained above by first constructing group-specific balanced ATTs and then taking their difference:
\[
\left(0.5\cdot 8+0.5\cdot 4\right)-\left(0.5\cdot 5+0.5\cdot 3\right)=2.
\]
Thus, the figure highlights that DiBGATT can be viewed either as the difference between balanced group-specific treatment effects or, equivalently, as the target-distribution-weighted average of covariate-specific treatment-effect gaps.

\begin{figure}[htbp]
    \centering
    \begin{tikzpicture}[font=\small, >=stealth, yscale=0.7]

%% -------------------------------------------------------
%% Coordinates  (x: time;  y: outcome)
%% -------------------------------------------------------
\def\xA{1.5}       % pre-treatment
\def\xB{6}         % opost-treatment
\def\xstart{0.4}   % axis start 
\def\xend{8.5}     % axis end 
\def\labelx{6.25}  % right-hand label start

% y-levels at t=0
\def\yCpre{0.6}
\def\yTpre{1.9}

% y-levels at t=1
\def\yCpost{2.6}
\def\yTcount{4.0}
\def\yRL{4.85}
\def\yRH{5.75}
\def\yBL{6.55}
\def\yBH{7.65}

% bracket x-position
\def\bkx{\xB + 3.85}

%% -------------------------------------------------------
%% Axes
%% -------------------------------------------------------
\draw[->, thick, mygray] (\xstart-0.4, 0) -- (\xend, 0);
\node[below, font=\small\sffamily] at (\xend-0.7, -0.15) {Time};
\draw[->, thick, mygray] (0, 0) -- (0, 7.8);
\node[rotate=90, anchor=south, font=\small\sffamily] at (-0.1, 6.4) {Outcome};

% ticks
\foreach \x/\yr in {\xA/0, \xB/1}{
  \draw[mygray] (\x, 0.07) -- (\x, -0.07);
  \node[below, font=\small] at (\x, -0.07) {\yr};
}

%% -------------------------------------------------------
%% Lines
%% -------------------------------------------------------

% Control (black solid)
\draw[black, thick]
  (\xA, \yCpre) -- (\xB, \yCpost);

% Counterfactual treated (black dashed)
\draw[black, thick, dashed]
  (\xA, \yTpre) -- (\xB, \yTcount);

% Red lines – low
\draw[myred, thick] (\xA, \yTpre) -- (\xB, \yRL);
\draw[myred, thick] (\xA, \yTpre) -- (\xB, \yRH);

% Blue lines – high education
\draw[myblue, thick] (\xA, \yTpre) -- (\xB, \yBL);
\draw[myblue, thick] (\xA, \yTpre) -- (\xB, \yBH);

%% -------------------------------------------------------
%% Endpoint dots
%% -------------------------------------------------------
\foreach \y in {\yCpre, \yTpre}
  \fill[black] (\xA, \y) circle (2.2pt);
\foreach \y in {\yCpost, \yTcount}
  \fill[black] (\xB, \y) circle (2.2pt);
\foreach \y in {\yRL, \yRH}
  \fill[myred] (\xB, \y) circle (2.2pt);
\foreach \y in {\yBL, \yBH}
  \fill[myblue] (\xB, \y) circle (2.2pt);

%% -------------------------------------------------------
%% Right-hand labels
%% -------------------------------------------------------
\node[left, font=\small] at (\xA-0.15, \yTpre) {$Y_0^T(0)$};
\node[left, font=\small] at (\xA-0.15, \yCpre) {$Y_0^C(0)$};

\node[right, myblue, font=\small] at (\labelx, \yBH)
  {$Y_1^T(1)\mid Z{=}\text{1},\, W{=}h$};
\node[right, myblue, font=\small] at (\labelx, \yBL)
  {$Y_1^T(1)\mid Z{=}\text{0},\, W{=}h$};
\node[right, myred, font=\small] at (\labelx, \yRH)
  {$Y_1^T(1)\mid Z{=}\text{1},\, W{=}\ell$};
\node[right, myred, font=\small] at (\labelx, \yRL)
  {$Y_1^T(1)\mid Z{=}\text{0},\, W{=}\ell$};
\node[right, font=\small] at (\labelx, \yTcount) {$Y_1^T(0)$};
\node[right, font=\small] at (\labelx, \yCpost)  {$Y_1^C(0)$};

%% -------------------------------------------------------
%% Brackets (DiBGATT parameters)
%% -------------------------------------------------------

% Blue bracket
\draw[myblue, thick] (\bkx, \yBL) -- (\bkx, \yBH);
\fill[myblue] (\bkx, \yBL) circle (2.2pt);
\fill[myblue] (\bkx, \yBH) circle (2.2pt);
\node[right, myblue, font=\small] at (\bkx+0.12, {0.5*(\yBL+\yBH)})
  {$\theta_h^{\mathit{\Delta GATT}}$};

% Red bracket
\draw[myred, thick] (\bkx, \yRL) -- (\bkx, \yRH);
\fill[myred] (\bkx, \yRL) circle (2.2pt);
\fill[myred] (\bkx, \yRH) circle (2.2pt);
\node[right, myred, font=\small] at (\bkx+0.12, {0.5*(\yRL+\yRH)})
  {$\theta_{\ell}^{\mathit{\Delta GATT}}$};

\end{tikzpicture}
    \caption{
    Illustration of DiBGATT in a \(2\times 2\) DiD setting with group status \(Z\in\{0,1\}\) and education \(W\in\{\ell,h\}\). The \textcolor{myblue}{blue} and \textcolor{myred}{orange} brackets show the within-education gender gaps in treatment effects for high- and low-education units, respectively. DiBGATT averages these covariate-specific gaps using the target distribution of \(W|D=1\), yielding \(\gamma^{\Delta B}\). The dashed line represents the counterfactual untreated outcome for treated units, \(Y_1^T(0)\), under parallel trends. For simplicity, untreated potential-outcome trends are assumed to be identical across groups.
    }
    \label{fig:DiBGATT_visualization}
\end{figure}

\bigskip

\subsection{Causal interpretation of the DiBGATT}
\label{sec:causal_interpretation}

While Proposition~\ref{prop:identification_bgatt} identifies the BGATT separately for each group \(Z\in\{0,1\}\), interpreting the difference
\[
\gamma^{\Delta B}=\gamma^B(1)-\gamma^B(0)
\]
as the causal effect of group status requires an additional assumption on the moderator \(Z\). Without this stronger interpretation, \(\gamma^{\Delta B}\) remains a well-defined and policy-relevant parameter. It compares treatment effects across groups after placing them on a common distribution of the observed balancing covariates \(W\). Thus, it answers whether treatment effects differ systematically across groups once differences in observed covariate composition have been removed.

This descriptive interpretation is often the relevant object for policy analysis. For example, when \(Z\) denotes gender, race, or another substantively important group characteristic, the researcher may be interested in whether an intervention has different effects across groups, without necessarily attributing the remaining difference causally to group status itself. A causal interpretation as the effect of \(Z\) is substantially stronger. It requires amongst other that, conditional on the covariates being held fixed, group status does not affect other determinants of the outcome or of treatment-effect heterogeneity. This exogeneity condition is unlikely to hold when \(Z\) shapes intermediate characteristics that are not fully captured by \(W\). For instance, if gender affects occupation, labor-market experience, health investments, or other confounders that in turn affect the outcome, then \(\gamma^{\Delta B}\) should not be interpreted as the causal effect of gender.

Accordingly, throughout the paper we interpret DiBGATT primarily as a balanced group contrast in treatment effects. 

\bigskip

% --------------------------------------------------
% Influence-function and Estimation
% --------------------------------------------------
\section{Influence Function and Estimation}
\label{sec:influence_function_estimation}

This section develops the influence-function representation for the Balanced Group Average Treatment Effect on the Treated (BGATT) and uses it to construct estimators suitable for flexible, high-dimensional settings. Building on this representation, we present estimation and inference procedures that are $\sqrt{n}$-consistent and asymptotically normal under standard regularity conditions.

\medskip

\subsection{Influence function for BGATT}
\label{sec:influence_function}

We next derive the influence function for the BGATT. An explicit influence-function representation enables the construction of estimators that remain valid when nuisance components—such as propensity scores or outcome regressions—are estimated flexibly using machine-learning methods, supporting $\sqrt{n}$-consistent estimation and asymptotically normal inference. In addition, the influence function provides a transparent characterization of how BGATT depends on the underlying data-generating process, clarifying the sources of identifying variation and the role of reweighting and covariate balance across groups. Theorem \ref{thm:if_bgatt} provides an explicit influence-function representation for the BGATT. 
\bigskip

\begin{theorem}[Influence Function for BGATT.] \label{thm:if_bgatt}
Under Assumptions \ref{ass:pta}--\ref{ass:exogeneity}, the influence function for the BGATT, $\phi_z$, is given by
\begin{align*}
    \phi_z &:= \mathbb{IF}(\gamma^B(z)) \\
    % &= \mathbb{IF}(\gamma^B_1(z)) - \mathbb{IF}(\gamma^B_0(z)) \\
    &= \omega_1(z;O) \cdot \left( Y_1 - Y_0 - m_0(X,z) \right) \\
    & \quad + \omega_2(z;O) \cdot \left( E[m_1(X,z)|D=1,Z=z,W] - E[m_0(X,z)|D=1,Z=z,W] \right) \\
    & \quad - \frac{D}{p_D(1)} \cdot \gamma^B(z)
\end{align*}
where
\begin{align*}
    &\omega_1(z;O) = \frac{\mathbf{1}_{\{Z=z\}}}{p_{Z \mid D=1, W}(z \mid w)\, p_D(1)} 
    \left(
    D - (1-D)\, \frac{p_{D \mid X, Z=z}(1 \mid x)}{p_{D \mid X, Z=z}(0 \mid x)}
    \right)
\end{align*}
and
\begin{align*}
    \omega_2(z;O) = \frac{D}{p_D(1)} - \frac{\mathbf{1}_{\{Z=z\}} \ D}{p_D(1) \ p_{Z \mid D=1, W}(z \mid w)}.
\end{align*}
\end{theorem}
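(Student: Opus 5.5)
Our plan is to derive $\phi_z$ as the pathwise derivative of the identified functional in Proposition \ref{prop:identification_bgatt}, using a point-mass contamination (Gateaux) argument in the style of \cite{kennedy2024semiparametric}. Write
\[
\gamma^B(z)=\frac{E\left[D\,\eta(W,z)\right]}{p_D(1)},\qquad \eta(w,z):=E\left[m_1(X,z)-m_0(X,z)\mid D=1,Z=z,W=w\right],
\]
so the functional is a ratio whose numerator integrates $\eta$ against the law of $(D,W)$. We then perturb each ingredient of $\eta$ and apply the chain rule, treating the inner expectation over $X\mid D=1,Z=z,W$, the regressions $m_1,m_0$, and the outer law $W\mid D=1$ in turn. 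The ratio structure yields the centering term directly: differentiating $1/p_D(1)$ gives $-\{D-p_D(1)\}\gamma^B(z)/p_D(1)$. Combined with the outer-integration contribution $D\,\eta(W,z)/p_D(1)$, this produces $\frac{D}{p_D(1)}\eta(W,z)-\frac{D}{p_D(1)}\gamma^B(z)$. This accounts for the first summand of $\omega_2$ and for the last line of $\phi_z$.

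We next handle the inner layer, which is the average of $g(X):=m_1(X,z)-m_0(X,z)$ over $X\mid D=1,Z=z,W=w$. Its influence function, with $W$ held at the observed value, is $\frac{\mathbf 1\{D=1,Z=z\}}{P(D=1,Z=z\mid W)}\{g(X)+\text{(IF of }g)-\eta(W,z)\}$. After integrating against $p_{W\mid D=1}$ and dividing by $p_D(1)$, the weight becomes $\frac{D\mathbf 1_{\{Z=z\}}}{p_D(1)\,p_{Z\mid D=1,W}(z\mid w)}$, because $p_{W\mid D=1}(w)/P(D=1,Z=z\mid w)=1/\{p_D(1)p_{Z\mid D=1,W}(z\mid w)\}\cdot p_W(w)^{-1}p_W(w)$. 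The $-\eta$ piece produces the second summand of $\omega_2$. The $g(X)$ piece must be combined with the regression IFs. The IF of $m_1(X,z)$, evaluated at the same weights, is $\frac{D\mathbf 1_{\{Z=z\}}}{\cdots}\{Y_1-Y_0-m_1\}$, which cancels the $+m_1$ from $g$ and leaves $D\mathbf 1_{\{Z=z\}}(Y_1-Y_0-m_0)/\{p_D(1)p_{Z|D=1,W}\}$. The IF of $m_0(X,z)$ is $\frac{(1-D)\mathbf 1_{\{Z=z\}}}{P(D=0,Z=z\mid X)}(Y_1-Y_0-m_0)$. Reweighted from the $X$-law of the $D=1,Z=z$ cell, this contributes the density ratio $\frac{p(X\mid D=1,Z=z)}{p(X\mid D=0,Z=z)}\cdot\frac{P(D=1,Z=z\mid W)}{P(D=0,Z=z\mid W)}$. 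Here we will use Bayes' rule and the fact that $W\subset X$ to collapse this ratio to $p_{D|X,Z=z}(1)/p_{D|X,Z=z}(0)$ times the same outer factor. Summing the $D=1$ and $D=0$ parts then gives exactly $\omega_1(z;O)(Y_1-Y_0-m_0)$.

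We expect the main obstacle to be this Bayes-rule simplification for the $m_0$ correction. The three layers of conditioning must cancel exactly, and the cancellation relies on $W$ being a subvector of $X$, so that conditioning on $X$ already fixes $W$. This is also where strong overlap (Assumption \ref{ass:overlap}) enters, ensuring the weights are bounded. Finally, we will check that $E[\phi_z]=0$ by iterated expectations. Assumptions \ref{ass:pta}--\ref{ass:exogeneity} are used only through Proposition \ref{prop:identification_bgatt}, to equate the causal estimand with the observed-data functional whose derivative we compute.
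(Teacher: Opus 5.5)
Your proposal is correct and follows essentially the paper's route. Both arguments assemble the influence function from those of the building blocks ($m_1$, $m_0$, the law of $X$ given $D=1,Z=z,W$, and the law of $W$ given $D=1$) by the product/chain rule, then collapse the weights with Bayes' rule using $W\subset X$. The differences are cosmetic: you keep $m_1-m_0$ together and write the outer layer as $E[D\,\eta(W,z)]/p_D(1)$, while the paper treats $\gamma^B_1(z)$ and $\gamma^B_0(z)$ separately through Terms (a)--(f).

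Two bookkeeping slips need fixing. First, the outer contribution must be the centered $D\,\eta(W,z)/p_D(1)-\gamma^B(z)$; otherwise your sum carries a leftover $+\gamma^B(z)$. Second, the density ratio in the $m_0$ correction must also condition on $W$. With that conditioning, $\frac{p(X\mid D=1,Z=z,W)}{p(X\mid D=0,Z=z,W)}\cdot\frac{P(D=1,Z=z\mid W)}{P(D=0,Z=z\mid W)}=\frac{p_{D\mid X,Z=z}(1\mid x)}{p_{D\mid X,Z=z}(0\mid x)}$ exactly. The version you wrote, without $W$, leaves a spurious factor $\frac{P(D=0,Z=z)\,P(D=1,Z=z\mid W)}{P(D=1,Z=z)\,P(D=0,Z=z\mid W)}$.
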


\begin{proof} 
See Appendix \ref{app:derivation_if}.
\end{proof}

The influence-function representation in Theorem~\ref{thm:if_bgatt} has the same basic structure as doubly robust DiD scores for average treatment effects. The first term is a residualized DiD component: for units in group \(Z=z\), treated observations contribute their observed outcome change \(Y_1-Y_0\), while untreated observations are reweighted to recover the counterfactual untreated trend for the treated. The difference from standard doubly robust DiD scores is that BGATT is not evaluated under the group-specific covariate distribution, but under a common distribution of \(W\). This requires an additional balancing component that reweights group \(z\) to the marginal treated distribution of \(W\). Using Bayes' rule, the residual weight \(\omega_1(z;O)\) can be written as
\begin{align*}
    &\omega_1(z;O) = \underbrace{\frac{\mathbf{1}_{\{Z=z\}} \ D}{p_D(1) \ p_{Z \mid D=1, W}(z \mid w)}}_{\text{treated weight}} - \underbrace{\frac{\mathbf{1}_{\{Z=z\}} (1-D)}{p_{Z \mid D=0}(z)p_D(0)}}_{\text{untreated weight}} \ \underbrace{\frac{p_{X \mid D=1, Z=z}(x)}{p_{X \mid D=0, Z=z}(x)}}_{\text{density ratio across } D} \ \underbrace{\frac{p_{W \mid D=1}(w)}{p_{W \mid D=1, Z=z}(w)}}_{W\text{-balance factor}}.
\end{align*}
The first term is the treated contribution after reweighting group \(z\) to the target distribution of \(W\). The second term is the usual DiD control contribution, which aligns untreated units with treated units within group \(z\), multiplied by the same \(W\)-balance factor so that the counterfactual trend is also constructed for the balanced target population. Thus, relative to standard doubly robust DiD scores, \(\omega_1(z;O)\) adds the reweighting needed to compare treatment effects under a common covariate distribution.

The second weight, \(\omega_2(z;O)\), captures an additional feature of the BGATT estimand. Because BGATT averages the group-specific conditional treatment effect over the marginal treated distribution of \(W\), rather than over the distribution of \(W\) among treated units with \(Z=z\), the influence function must also account for the estimation of this target distribution. This is the role of \(\omega_2(z;O)\). It corrects the plug-in component
\[
E[m_1(X,z)\mid D=1,Z=z,W]
-
E[m_0(X,z)\mid D=1,Z=z,W]
\]
so that it is averaged over the correct reference population. Together, the \(\omega_1\) and \(\omega_2\) terms show that BGATT extends the usual doubly robust DiD score by combining the standard DiD adjustment with the influence-function terms required for covariate balancing.

When the covariates used for identification coincide with the covariates used for balancing, \(X=W\), the efficient influence function simplifies substantially. In this case, the nested conditioning on \(X\) within strata of \(W\) collapses, so the score no longer requires an additional projection from \(X\) to the balancing covariates.

\bigskip

% --------------------------------------------------
% Estimation and Double Machine Learning
% --------------------------------------------------
\subsection{Estimator Construction}
\label{sec:estimator}

This subsection describes the construction of an estimator for DiBGATT based on the influence-function representation established in the previous section. The proposed estimator follows the double machine learning (DML) paradigm, combining cross-fitting with flexible estimation of nuisance functions to obtain valid inference in high-dimensional settings.

\paragraph{Step 1: Cross-fitting and estimation of nuisance components.}
Let the sample be partitioned into $K$ mutually exclusive folds $\{\mathcal{I}_k\}_{k=1}^K$. For each fold $k$, we estimate the nuisance components using observations in the complement $\mathcal{I}_k^c$. Specifically, we obtain estimators
\[
\hat m_d^{[k]}(X,z) \approx \mathbb{E}[Y \mid D=d, X, Z=z], \quad d \in \{0,1\},
\]
the treatment propensity score $\hat p^{[k]}(D=1 \mid X,Z)$, the group assignment probability $\hat p^{[k]}(Z=z \mid D,W)$, and the conditional expectation
\[
\widehat{\mathbb{E}}^{[k]}\!\left[\hat m_d(X,z)\mid D=1, Z=z, W\right].
\]
These quantities may be estimated using generic machine learning methods, provided they satisfy the conditions stated in Section \ref{sec:asymptotics}.

\paragraph{Step 2: Estimation of DiBGATT.}
For each observation $i$, we evaluate the estimated influence function $\hat\phi(O_i)$ using nuisance estimates obtained from the fold $k(i)$ such that $i \in \mathcal{I}_{k(i)}$. The DiBGATT estimator $\hat\gamma^{\Delta B}$ is defined as the solution to the empirical moment condition
\[
\mathbb{E}_n\!\left[\hat\phi(O_i)\right] = 0.
\]
Given the linearity of the influence function in the target parameter, this equation admits an explicit closed-form solution.

\paragraph{Step 3: Asymptotic variance estimation and inference.}
Inference is conducted using the empirical variance of the estimated influence function. The asymptotic variance is estimated by
\[
\hat V = \mathbb{E}_n\!\left[\hat\phi(O_i)^2\right],
\]
yielding standard errors $\mathrm{SE}(\hat\gamma^{\Delta B}) = \sqrt{\hat V / n}$. Under the conditions stated in Section \ref{sec:asymptotics}, the resulting estimator is asymptotically normal, and $(1-\alpha)$ confidence intervals for DiBGATT are constructed as
\[
\hat\gamma^{\Delta B} \pm z_{1-\alpha/2}\,\mathrm{SE}(\hat\gamma^{\Delta B}),
\]
where $z_{1-\alpha/2}$ denotes the $(1-\alpha/2)$ quantile of the standard normal distribution.

\bigskip

% --------------------------------------------------
% Asymptotic Theory
% --------------------------------------------------
\subsection{Asymptotic Properties}
\label{sec:asymptotics}

We establish the large-sample properties of the BGATT estimator constructed in Section \ref{sec:estimator}. The key insight is that the doubly robust structure of $\phi_z$, combined with the cross-fitting procedure of Step 1, ensures that the bias from estimating the nuisance functions is second-order and therefore negligible at the $\sqrt{n}$ scale. We first state additional conditions required.

\begin{assumption}[Bounded Second Moments]
\label{ass:moments}
$E[\|\phi_z(O)\|^2] < \infty$ for each $z \in \mathcal{Z}$.
\end{assumption}

\begin{assumption}[Product Rate Condition]
\label{ass:product_rate}
For each $z \in \mathcal{Z}$, the nuisance estimators satisfy
\begin{align*}
    \|\hat{m}_d^{(k)} - m_d\|_{L_2} \cdot \|\hat{e}^{(k)} - e\|_{L_2} 
    &= o_p(n^{-1/2}), \\
    \|\hat{p}_Z^{(k)} - p_Z\|_{L_2} \cdot \|\hat{\Delta}_m^{(k)} 
    - \Delta_m\|_{L_2} &= o_p(n^{-1/2}),
\end{align*}
for each fold $k$, where $e(x,z) = P(D=1 \mid X=x, Z=z)$ and 
$\Delta_m(x,z,w) = E[m_1(X,z) - m_0(X,z) \mid D=1, Z=z, W=w]$.
\end{assumption}

\begin{assumption}[Stability]\label{ass:stability}
Consider a generic sample split into a training set $\mathcal I_2$ and a test
set $\mathcal I_1$, which are independent. For each $z \in \{0,1\}$, let
$\hat\delta_z(O_i)$ be the pseudo-outcome used in the second-step regression,
constructed from first-step nuisance estimates using only observations in
$\mathcal I_2$, and let $\delta_z(O_i)$ denote its population counterpart.
For $w$ in the support of $W$, define
\begin{align*}
    g_z(w)
        &= E\big[\delta_z(O_i)\mid D_i = 1,\, Z_i = z,\, W_i = w\big],\\
    \tilde g_z(w)
        &= E\big[\hat\delta_z(O_i)\mid D_i = 1,\, Z_i = z,\, W_i = w\big],\\
    \hat b_z(w)
        &= E\big[\hat\delta_z(O_i) - \delta_z(O_i)
                \mid D_i = 1,\, Z_i = z,\, W_i = w\big].
\end{align*}
Let $\hat g_z(w)$ denote the second-step regression estimator of $g_z(w)$
that regresses the pseudo-outcomes on $W$ using only the test sample
$\mathcal I_1$ (restricted to units with $D_i = 1, Z_i = z$). We assume that
$\hat g_z$ is \textit{stable} at $(z,w)$ with respect to some distance metric
$a$ in the sense that
\[
\frac{
    \hat g_z(w)
    - \tilde g_z(w)
    - E\big[\hat b_z(W_i)\mid D_i = 1,\, Z_i = z,\, W_i = w\big]
}{
    \big\{E\big[(\tilde g_z(w) - g_z(w))^2\big]\big\}^{1/2}
}
\xrightarrow{p} 0
\quad\text{whenever}\quad
a(\hat\delta_z,\delta_z) \xrightarrow{p} 0.
\]
\end{assumption}

Assumption \ref{ass:moments} is standard.
Assumption \ref{ass:product_rate} embodies a novel cross-double robustness property of the proposed estimator. Unlike standard doubly robust estimators, which require correct specification of one of two nuisance functions, consistency here holds if, for each of the following nuisance pairs, at least one component is correctly specified:
\begin{align*}
    (1)\quad &\bigl\{P(D \mid X, Z),\; 
               E[Y_1 - Y_0 \mid D=0, X, Z]\bigr\}, \\
    (2)\quad &\bigl\{P(Z \mid D=1, W),\; 
               E[m_1(X,Z) - m_0(X,Z) \mid D=1, Z, W]\bigr\},
\end{align*}
where $m_d(X,Z) = E[Y_1 - Y_0 \mid D=d, X, Z]$ for $d \in \{0,1\}$. 
Assumption \ref{ass:stability} says that the second-step regression estimator $\widehat{\mathbb{E}}[\cdot]$ used in the construction of our estimator is stable in the sense of \citet{Kennedy:2023}.

These conditions deliver the following asymptotic linear representation.\\

\begin{theorem}[Asymptotic Normality]
\label{thm:asymptotic}
Under Assumptions \ref{ass:pta}-\ref{ass:stability}, the BGATT estimator 
satisfies
\begin{align*}
    \sqrt{n}\bigl(\hat\gamma^B(z) - \gamma^B(z)\bigr) 
    = \frac{1}{\sqrt{n}}\sum_{i=1}^n \phi_z(O_i; P_0) + o_p(1)
    \;\xrightarrow{d}\; 
    \mathcal{N}\!\bigl(0,\, E_{P_0}[\phi_z(O;P_0)^2]\bigr),
\end{align*}
where $\phi_z(O;P_0)$ is the influence function characterized in Theorem \ref{thm:if_bgatt}. The estimator is $\sqrt{n}$-consistent, asymptotically normal, and semiparametrically efficient, with asymptotic variance equal to the semiparametric efficiency bound.
\end{theorem}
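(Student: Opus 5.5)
The proof follows the standard cross-fitted double machine learning argument of \citet{chernozhukov2018generic}, adapted to a score that contains a second-stage regression $\hat\Delta_m$. Write $\hat\gamma^B(z)$ as the solution of $\mathbb{E}_n[\hat\phi_z(O_i)]=0$. Because the score is linear in $\gamma$ with coefficient $-D/p_D(1)$, we have $\hat\gamma^B(z)=\mathbb{E}_n[\hat\psi_z(O_i)]/\mathbb{E}_n[D_i/\hat p_D(1)]$, where $\hat\psi_z$ collects the $\omega_1$ and $\omega_2$ terms. Fix a fold $k$ and decompose
\[
\mathbb{E}_{n,k}[\hat\psi_z-\psi_z]=(\mathbb{E}_{n,k}-P)[\hat\psi_z-\psi_z]+P[\hat\psi_z-\psi_z].
\]
The first term is an empirical process term. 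The second is a conditional bias term, computed with the nuisance estimates held fixed at their values from $\mathcal I_k^c$. The estimated $\hat p_D(1)$ in the denominator is a sample mean, so its contribution is already absorbed in the $-D\gamma^B(z)/p_D(1)$ piece of $\phi_z$ by a delta-method step.

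\textbf{Empirical process term.} Condition on the training folds. Then $\hat\psi_z-\psi_z$ is a fixed function, and Chebyshev's inequality bounds $(\mathbb{E}_{n,k}-P)[\hat\psi_z-\psi_z]$ by $O_p(n^{-1/2}\|\hat\psi_z-\psi_z\|_{L_2})$. This is $o_p(n^{-1/2})$ once each nuisance is $L_2$-consistent. Consistency follows from the product rate condition together with the usual implicit boundedness. Assumption~\ref{ass:overlap} keeps the inverse weights $1/p_{Z\mid D=1,W}$ and $e/(1-e)$ bounded, so the Lipschitz dependence of $\psi_z$ on the nuisances transfers the nuisance rates to $\psi_z$.

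\textbf{Bias term.} This is the core computation. I would expand $P[\hat\psi_z]-\gamma^B(z)P[D/p_D(1)]$ by iterated expectations, conditioning first on $(D,X,Z)$ and then on $(D=1,Z,W)$.
\begin{itemize}
\item The $\omega_1$ part, after conditioning on $(X,Z=z)$, leaves a cross term of the form $E[(\hat e-e)(\hat m_0-m_0)\times\text{bounded}]$. This is the familiar doubly robust DiD remainder. Cauchy--Schwarz bounds it by $\|\hat m_0-m_0\|\,\|\hat e-e\|$, which is $o_p(n^{-1/2})$ by the first line of Assumption~\ref{ass:product_rate}.
\item The treated piece of $\omega_1$ contributes $\mathbf 1\{Z=z\}D(Y_1-Y_0)/(\hat p_Z\,p_D(1))$. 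Combining it with the $\omega_2$ term and conditioning on $(D=1,W)$ gives $E_{W\mid D=1}[(1-p_Z/\hat p_Z)(\hat\Delta_m-\Delta_m)]$ plus the target $E_{W\mid D=1}[\Delta_m]=\gamma^B(z)$. The remainder is bounded by $\|\hat p_Z-p_Z\|\,\|\hat\Delta_m-\Delta_m\|$, which is small by the second product-rate line.
\end{itemize}
This calculation is exactly the Neyman-orthogonality or cross-double-robustness structure, so I expect it to go through routinely once the terms are grouped correctly.

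\textbf{Main obstacle.} The difficulty is that $\hat\Delta_m$ is not an ordinary regression of an observed variable. It regresses the estimated pseudo-outcome $\hat m_1-\hat m_0$ on $W$ within $(D=1,Z=z)$, so its error contains a component inherited from the first-stage errors in $\hat m_d$. The product-rate condition must be applied to the oracle-relative error $\hat\Delta_m-\Delta_m$. To control that error I would invoke Assumption~\ref{ass:stability} together with Kennedy's (2023) stability lemma. This gives
\[
\hat g_z-g_z=(\tilde g_z-g_z)+E[\hat b_z\mid W]+o_p(\|\tilde g_z-g_z\|),
\]
which bounds $\|\hat\Delta_m-\Delta_m\|$ by an oracle regression error plus a term proportional to $\|\hat m_d-m_d\|$. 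Substituting this bound into the second remainder requires that $\|\hat p_Z-p_Z\|\,\|\hat m_d-m_d\|$ also be $o_p(n^{-1/2})$. I would either verify that this is implied by the stated conditions or add it explicitly.

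\textbf{Conclusion.} Collecting the two steps gives the asymptotically linear representation $\sqrt n(\hat\gamma^B(z)-\gamma^B(z))=n^{-1/2}\sum_i\phi_z(O_i)+o_p(1)$. The Lindeberg--L\'evy CLT, using Assumption~\ref{ass:moments}, then yields the normal limit. Efficiency follows because $\phi_z$ is the efficient influence function from Theorem~\ref{thm:if_bgatt}, derived in the nonparametric model, so its variance equals the semiparametric efficiency bound.
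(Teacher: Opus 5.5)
Your argument is essentially the paper's proof. The paper writes $\hat\gamma^B-\gamma^B=S^*+T_1+T_2$ through a von Mises/one-step expansion, which is equivalent to your estimating-equation decomposition. It removes the empirical-process term $T_1$ by cross-fitting and splits the remainder into exactly your two pieces:
\[
R_{2,A}=E_P\Bigl[\Bigl(1-\tfrac{p_Z}{\hat p_Z}\Bigr)\bigl(\hat\Delta_m-\Delta_m\bigr)\,\Big|\,D=1\Bigr],\qquad
R_{2,B}=E_P\Bigl[\tfrac{\mathbf{1}\{Z=z\}}{\hat p_Z\,p_D(1)}\,\tfrac{\hat e-e}{1-\hat e}\,(\hat m_0-m_0)\Bigr].
\]
Each is bounded by Cauchy--Schwarz once $\hat p_Z\ge\varepsilon_A$ and $1-\hat e\ge\varepsilon_B$.

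\textbf{Your ``main obstacle'' does not arise.} Assumption~\ref{ass:product_rate} is imposed directly on $\|\hat\Delta_m-\Delta_m\|_{L_2}$, the oracle-relative error of the second-stage regression. So $R_{2,A}=o_p(n^{-1/2})$ follows immediately, and the paper never uses Assumption~\ref{ass:stability}. The stability decomposition, and the extra product $\|\hat p_Z-p_Z\|\,\|\hat m_d-m_d\|$ it generates, would only be needed to reduce that high-level condition to primitive rates on $\hat m_d$ and an oracle regression. You should not add that condition here.

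\textbf{Consistency is an extra assumption.} $L_2$-consistency of each nuisance does not follow from the product-rate conditions: a product can be $o_p(n^{-1/2})$ while one factor stays bounded away from zero. You therefore need to assume it separately for the empirical-process step, together with boundedness of the estimated weights. The paper does this informally by requiring $\|\phi(\cdot;\hat P_{-k})-\phi(\cdot;P)\|=o_P(1)$.

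Your delta-method treatment of the estimated $\hat p_D(1)$ is slightly more careful than the paper, which treats $P(D=1)$ as known in the remainder.
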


\begin{proof}
See Appendix \ref{app:asymptotic_prop}. The key steps are: (i) the von Mises expansion of $\gamma^B(\hat P) - \gamma^B(P_0)$ yields a leading linear term and a second-order remainder $R_2(\hat P, P_0)$; (ii) Neyman orthogonality of $\phi_z$ ensures $T_1 = o_p(n^{-1/2})$; (iii) the product rate condition (Assumption \ref{ass:product_rate}) ensures $R_2(\hat P, P_0) = o_p(n^{-1/2})$; and (iv) the central limit theorem applied to the leading term $S^* = (P_n - P)\{\phi(\cdot; P_0)\}$ delivers the result.
\end{proof}

\bigskip

% --------------------------------------------------
% Extensions
% --------------------------------------------------
\section{Extension to Multiple Time Periods and Staggered Adoption}
\label{sec:extension_multiple_periods}

We extend the framework to accommodate multiple time periods and staggered  treatment adoption.  Throughout this section we have repeated outcomes, use never-treated units as the comparison group, and abstract from treatment anticipation. Extension to repeated cross-sections, the not-yet-treated comparison group and to a known number $\delta > 0$ of anticipation periods are straightforward but omitted for brevity.

\medskip

\subsection{Setup and notation}

We follow the notation of \citet{callaway2021difference}. Let \(\mathcal{T}\) denote the number of time periods, indexed by \(t\in\{1,\ldots,\mathcal{T}\}\). Treatment adoption is irreversible, and \(G_i\) denotes the first period in which unit \(i\) is treated. Thus, \(G_i=g\) means that unit \(i\) first receives treatment in period \(g\), while \(G_i=\infty\) denotes units that are never treated.

\begin{assumption}[Staggered Treatment Adoption]
\label{ass:staggered}
$D_1 = 0$ almost surely, and $D_{t-1} = 1$ implies $D_t = 1$ almost surely for $t = 2, \ldots, \mathcal{T}$.
\end{assumption}

Let $G_i$ denote the period of first treatment, with $G_i = \infty$ for never-treated units. Define $G_{i,g} = \mathbf{1}\{G_i = g\}$, $C_i = \mathbf{1}\{G_i = \infty\}$, and let $\mathcal{G} \subseteq \{2,\ldots,\mathcal{T}\}$ denote the support of $G$. Let $Y_{i,t}(0)$ denote the potential outcome of unit $i$ at time $t$ absent treatment, and $Y_{i,t}(g)$ the potential outcome under first treatment in period $g$. Observed and potential outcomes satisfy
\begin{align}
    Y_{i,t} = Y_{i,t}(0) + \sum_{g=2}^{\mathcal{T}} 
    \bigl(Y_{i,t}(g) - Y_{i,t}(0)\bigr) \cdot G_{i,g}.
    \label{eq:observed_potential}
\end{align}
The generalized propensity score is defined as $p_g(X,Z) = P(G_g = 1 \mid G_g + C = 1, X, Z)$, the probability of first treatment in period $g$ conditional on covariates, moderator, and belonging either to group $g$ or the never-treated group.

The following assumptions extend Assumptions \ref{ass:pta}--\ref{ass:exogeneity} to the multiple time period setting.

\begin{assumption}[Conditional Parallel Trends]
\label{ass:pta_multi}
For each $g \in \mathcal{G}$ and $t \in \{2,\ldots,\mathcal{T}\}$,
\begin{align*}
    E\bigl[Y_t(0) - Y_{t-1}(0) \mid X, G_g = 1\bigr] = 
    E\bigl[Y_t(0) - Y_{t-1}(0) \mid X, C = 1\bigr] \quad \text{a.s.}
\end{align*}
\end{assumption}

\begin{assumption}[No Anticipation]
\label{ass:anticipation_multi}
For each $g \in \mathcal{G}$ and $t < g$, $E[Y_t(g) \mid G_g = 1, X, Z] = E[Y_t(0) \mid G_g = 1, X, Z]$ almost surely.
\end{assumption}

\begin{assumption}[Overlap]
\label{ass:overlap_multi}
For each $g \in \mathcal{G}$ and $t \in \{2,\ldots,\mathcal{T}\}$, there 
exists $\varepsilon > 0$ such that $P(G_g = 1) > \varepsilon$, 
$p_g(X,Z) < 1-\varepsilon$ almost surely, and 
$\varepsilon \leq p_{Z \mid G_g=1,W}(z \mid w) \leq 1-\varepsilon$ 
almost surely.
\end{assumption}

\begin{assumption}[Exogeneity]
\label{ass:exogeneity_multi}
$X_t(g) = X_t(0) = X_t$ and $Z_t(g) = Z_t(0) = Z_t$ almost surely, for all $g \in \mathcal{G}$ and $t \in \{1,\ldots,\mathcal{T}\}$.
\end{assumption}

\bigskip

\subsection{The Group-Time BGATT}

The parameter of interest is a group-time analog of the BGATT introduced in Section \ref{sec:identification},
\begin{align}
    \gamma^B_{g,t}(z) = E_W\!\Bigl[E\bigl[E[Y_t(g) - Y_t(0) 
    \mid G_g = 1, Z = z, X]\mid G_g = 1, Z = z, W\bigr]\Bigr],
    \label{eq:bgatt_multi}
\end{align}
capturing the average effect of first receiving treatment in period $g$, evaluated at time $t$, for units with moderator value $z$. \\

\begin{proposition}[Identification]
\label{prop:id_multi}
Under Assumptions \ref{ass:staggered}, \ref{ass:pta_multi}, \ref{ass:anticipation_multi}, and \ref{ass:exogeneity_multi} for each $g \in \mathcal{G}$ and $t \geq g$,
\begin{align*}
    \gamma^B_{g,t}(z) = E_W\!\Bigl[E\bigl[\mu_{g,t}^{nev}(z,X) 
    \mid G_g = 1, Z = z, W\bigr]\Bigr],
\end{align*}
where $\mu_{g,t}^{nev}(z,x) = E[Y_t - Y_{g-1} \mid G_g=1, Z=z, X=x] - E[Y_t - Y_{g-1} \mid C=1, Z=z, X=x]$.
\end{proposition}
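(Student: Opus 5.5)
The plan is to reduce the group-time statement to a pointwise (in $X$) identification of the conditional group-time ATT and then integrate. Since the outer two layers of \eqref{eq:bgatt_multi} are averages over distributions of observables, it suffices to show that for each $g\in\mathcal G$, $t\ge g$, and almost every $(z,x)$,
\[
E[Y_t(g)-Y_t(0)\mid G_g=1,Z=z,X=x]=\mu^{nev}_{g,t}(z,x).
\]
Assumption \ref{ass:exogeneity_multi} ensures that conditioning on $(X,Z)$ is the same as conditioning on their potential values. Hence the conditioning sets $\{G_g=1,Z=z,X\}$ and $\{G_g=1,Z=z,W\}$, and the law of $W$, are features of the observed distribution. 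Once the inner conditional expectation is replaced by an identified function, the full expression is identified. This is the same route as the proof of Proposition \ref{prop:identification_bgatt}, now with $g-1$ as the base period and $C=1$ as the comparison group.

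For the pointwise step, I would first apply \eqref{eq:observed_potential}. On $\{G_g=1\}$ we have $Y_t=Y_t(g)$, so $E[Y_t(g)\mid G_g=1,Z,X]=E[Y_t\mid G_g=1,Z,X]$. Next I would add and subtract $E[Y_{g-1}(0)\mid G_g=1,Z,X]$. Assumption \ref{ass:anticipation_multi} at period $g-1<g$, combined with $Y_{g-1}=Y_{g-1}(g)$ on $G_g=1$, gives $E[Y_{g-1}(0)\mid G_g=1,Z,X]=E[Y_{g-1}\mid G_g=1,Z,X]$. The remaining counterfactual term is $E[Y_t(0)-Y_{g-1}(0)\mid G_g=1,Z,X]$. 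I would write it as the telescoping sum $\sum_{s=g}^{t}E[Y_s(0)-Y_{s-1}(0)\mid G_g=1,Z,X]$, apply conditional parallel trends period by period to switch each summand to $C=1$, and telescope back. On $C=1$ we have $Y_s=Y_s(0)$ for all $s$, so the result equals $E[Y_t-Y_{g-1}\mid C=1,Z,X]$. Collecting the terms gives $\mu^{nev}_{g,t}(z,x)$.

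The main obstacle is a mismatch in conditioning sets. Assumption \ref{ass:pta_multi} is stated conditional on $X$ only, whereas the telescoping step needs parallel trends conditional on $(X,Z)$, because $\mu^{nev}_{g,t}$ conditions on $Z=z$. I would first check whether $Z$ is a component of $X$ in this section. If it is not, I would read Assumption \ref{ass:pta_multi} as holding conditional on $(X,Z)$, in line with Assumption \ref{ass:pta} in the two-period case, and state this explicitly in the proof. The requirement is genuine: without trends conditional on $Z$, the within-$z$ comparison is not justified.

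A secondary point is that all conditional expectations must be well defined on the relevant supports. This needs the comparison units to cover $(X,Z)$ among $G_g=1$, and for the outer averages it needs $P(Z=z\mid G_g=1,W)>0$. Both follow from Assumption \ref{ass:overlap_multi}, although the proposition does not list it. I would note that overlap is needed implicitly, or restrict the claim to the support where the conditional means exist.
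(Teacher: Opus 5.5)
Your proposal is correct and follows the same route as the paper, which simply reruns the argument of Proposition~\ref{prop:identification_bgatt} with $G_g$ in place of $D$, never-treated units as the comparison group, and the long difference $Y_t-Y_{g-1}$ as the outcome change. Several details that this one-line reduction leaves implicit are exactly the ones you supply: the telescoping of the one-period trends, the use of no anticipation at period $g-1$, and reading Assumption~\ref{ass:pta_multi} as conditional on $(X,Z)$, as in Assumption~\ref{ass:pta}. Your flag that the assumption as literally stated, conditional on $X$ only, does not suffice is well taken.
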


\begin{proof}
The proof follows the same arguments as the proof of Proposition \ref{prop:identification_bgatt}, replacing $D$ with $G_g$, $C$ with the never-treated indicator, and $Y_1 - Y_0$ with the long difference $Y_t - Y_{g-1}$.
\end{proof}

\bigskip

\subsection{Aggregation}
\label{sec:aggregation}

The group-time parameters $\{\gamma^B_{g,t}(z)\}_{g,t}$ can be aggregated into summary measures using exactly the weighting schemes as stated in Table 1 of \cite{callaway2021difference}. For a set of weights $\{w(g,t)\}$ as defined therein, the aggregated BGATT at moderator value $z$ is
\begin{align}
    \Gamma^B(z) = \sum_{g \in \mathcal{G}}\sum_{t=2}^{\mathcal{T}} 
    w(g,t)\, \gamma^B_{g,t}(z),
    \label{eq:aggregated_bgatt}
\end{align}
which nests event-study, group-specific, calendar-time, and overall aggregations as special cases; see \cite{callaway2021difference} for the explicit forms of $w(g,t)$ in each case.

A key observation is that the weights $w(g,t)$ do not depend on $z$. Consequently, the DiBGATT, the difference between two BGATTs for different values of the moderator variable, satisfies
\begin{align*}
    \Gamma^B(1) - \Gamma^B(0) = \sum_{g \in \mathcal{G}}
    \sum_{t=2}^{\mathcal{T}} w(g,t)\,
    \bigl(\gamma^B_{g,t}(1) - \gamma^B_{g,t}(0)\bigr)
\end{align*}
by linearity, so that all asymptotic results derived for $\gamma^B_{g,t}(z)$ carry over immediately to aggregated parameters and their differences.

\bigskip

\subsection{Influence Function for Group-Time BGATT}

The influence function for the group-time BGATT is a direct extension of $\phi_z$ derived in Section \ref{sec:influence_function}, replacing $D$ with $G_g$, $p_D(1)$ with $\pi_g = P(G_g = 1)$, and $Y_1 - Y_0$ with the long difference $Y_t - Y_{g-1}$. Specifically, for each $(g,t,z) \in \mathcal{G} \times \{2,\ldots,\mathcal{T}\} \times \mathcal{Z}$,
\begin{align}
    \phi_{g,t,z} &= \omega_1^{g,t}(z;O) \cdot 
    \bigl(Y_t - Y_{g-1} - m_0^{g,t}(X,z)\bigr) + \omega_2^{g,t}(z;O) \cdot \Delta_m^{g,t}(z,W) 
    - \frac{G_g}{\pi_g}\,\gamma^B_{g,t}(z),
    \label{eq:if_multi}
\end{align}
where
\begin{align*}
    \omega_1^{g,t}(z;O) &= \frac{\mathbf{1}_{\{Z=z\}}}{p_{Z \mid G_g=1,W}
    (z \mid w)\,\pi_g} \left(G_g - (1-G_g)
    \frac{p_{G_g \mid X,Z=z}(1 \mid x)}{p_{G_g \mid X,Z=z}(0 \mid x)}\right), \\
    \omega_2^{g,t}(z;O) &= \frac{G_g}{\pi_g} - 
    \frac{\mathbf{1}_{\{Z=z\}}\,G_g}{\pi_g\,p_{Z \mid G_g=1,W}(z \mid w)},
\end{align*}
and $\Delta_m^{g,t}(z,W) = E[m_1^{g,t}(X,z)\mid G_g=1,Z=z,W] - E[m_0^{g,t}(X,z)\mid G_g=1,Z=z,W]$, with $m_j^{g,t}(x,z) = E[Y_t - Y_{g-1} \mid G_g = j, X=x, Z=z]$ for $j \in \{0,1\}$. 

\bigskip

\subsection{Asymptotic Properties}

Let $\boldsymbol{\phi}(O) = \{\phi_{g,t,z}(O)\}_{(g,t,z) \in \mathcal{I}}$ denote the vector of influence functions stacked across all $(g,t,z) \in \mathcal{I} = \mathcal{G} \times \{2,\ldots,\mathcal{T}\} \times \mathcal{Z}$, and let $\widehat{\mathbf{\Gamma}}^B$ and $\mathbf{\Gamma}^B$ denote the corresponding stacked vectors of estimated and true group-time BGATTs.

\begin{theorem}
\label{thm:asymptotic_multi}
Under Assumptions \ref{ass:product_rate}, \ref{ass:moments}, \ref{ass:staggered}, \ref{ass:pta_multi}, \ref{ass:anticipation_multi}, \ref{ass:overlap_multi}, and \ref{ass:exogeneity_multi}:

\noindent(i) For each $(g,t,z) \in \mathcal{I}$,
\begin{align*}
    \sqrt{n}\bigl(\widehat{\gamma}^B_{g,t}(z) - \gamma^B_{g,t}(z)\bigr) 
    = \frac{1}{\sqrt{n}}\sum_{i=1}^n \phi_{g,t,z}(O_i) + o_p(1).
\end{align*}

\noindent(ii) Jointly over $\mathcal{I}$,
\begin{align*}
    \sqrt{n}\bigl(\widehat{\mathbf{\Gamma}}^B - \mathbf{\Gamma}^B\bigr) 
    \xrightarrow{d} N(0, \Sigma),
\end{align*}
where $\Sigma = E[\boldsymbol{\phi}(O)\boldsymbol{\phi}(O)']$.
\end{theorem}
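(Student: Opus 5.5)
The plan is to reduce Theorem \ref{thm:asymptotic_multi} to the two-period argument behind Theorem \ref{thm:asymptotic}, applied cell by cell, and then stack the resulting linear representations.

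\textbf{Step 1: reduce each cell to the two-period problem.} Fix $(g,t,z)\in\mathcal{I}$ and restrict attention to the subsample with $G_g + C = 1$. Define the long difference $\Delta Y_{g,t} = Y_t - Y_{g-1}$. On this subsample, the pair $(G_g, \Delta Y_{g,t})$ plays exactly the role of $(D, Y_1 - Y_0)$ in Section \ref{sec:influence_function_estimation}. By Proposition \ref{prop:id_multi}, $\gamma^B_{g,t}(z)$ is the BGATT functional of this reduced data structure. By construction, the score \eqref{eq:if_multi} is the score of Theorem \ref{thm:if_bgatt} with $D$, $p_D(1)$ and $Y_1 - Y_0$ replaced by $G_g$, $\pi_g$ and $\Delta Y_{g,t}$. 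Units outside $\{G_g + C = 1\}$ have $G_g = 0$, and the control weight must include the factor $C$. I would check that this indicator enters $\omega_1^{g,t}$ exactly as the factor $1-D$ did, so that these units contribute zero.

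\textbf{Step 2: asymptotic linearity for one cell (part (i)).} I would repeat the proof of Theorem \ref{thm:asymptotic} with cross-fitting. Write $\hat\gamma^B_{g,t}(z) - \gamma^B_{g,t}(z) = (P_n - P)\phi_{g,t,z} + T_1 + R_2$, where $T_1$ is the empirical-process term with estimated nuisances. Since the score is linear in $\gamma$, the estimator solves $\mathbb{E}_n[\hat\phi_{g,t,z}] = 0$ in closed form, and the normalizing factor $\mathbb{E}_n[G_g]/\pi_g \to 1$ handles the $\gamma$ coefficient.

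The term $T_1$ is handled by cross-fitting. Conditional on the training folds, it is a centered average whose variance is $\|\hat\phi - \phi\|_{L_2}^2 / n$. This is $o_p(1/n)$ by consistency of the nuisances together with the overlap bounds in Assumption \ref{ass:overlap_multi}, which keep the inverse weights $1/\pi_g$, $1/(1 - p_g)$ and $1/p_{Z\mid G_g=1,W}$ bounded.

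The remainder $R_2$ is handled by computing $P\hat\phi - \gamma$ explicitly. Using iterated expectations over $(X,Z)$ and then $(W, G_g)$, the bias should factor into two integrated products: $(\hat e - e)(\hat m_0 - m_0)$ and $(\hat p_Z - p_Z)(\hat\Delta_m - \Delta_m)$. Cauchy--Schwarz, Assumption \ref{ass:product_rate} applied to the $(g,t)$ nuisances, and bounded inverse weights then make both products $o_p(n^{-1/2})$.

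\textbf{Step 3: joint convergence (part (ii)).} The set $\mathcal{I}$ is finite, so part (i) gives a vector linear representation $\sqrt n(\widehat{\mathbf\Gamma}^B - \mathbf\Gamma^B) = n^{-1/2}\sum_i \boldsymbol\phi(O_i) + o_p(1)$. Each component has mean zero, since $\phi_{g,t,z}$ is a centered influence function, and finite second moments by Assumption \ref{ass:moments}. The multivariate Lindeberg--L\'evy CLT together with Slutsky then yields $N(0,\Sigma)$ with $\Sigma = E[\boldsymbol\phi\boldsymbol\phi']$. The Cram\'er--Wold device makes the joint step immediate once all cells are linearized on a common sample.

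\textbf{Main obstacle.} The hardest step is Step 2's bias calculation. The nested nuisance $\hat\Delta_m$ is itself a regression of estimated pseudo-outcomes on $W$. Its error therefore contains a component driven by the first-stage errors $\hat m_d - m_d$, and this must be shown not to create an additional first-order term. My first attempt would invoke the stability condition of Assumption \ref{ass:stability}, exactly as in Theorem \ref{thm:asymptotic}. That condition lets the second-stage error be replaced by the oracle regression error plus a smoothed bias $E[\hat b_z \mid W]$. Under the $\omega_2$ weighting, this smoothed bias cancels against the corresponding part of the $\omega_1$ term.

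Two further points need care. First, the theorem lists Assumption \ref{ass:product_rate} but not stability, so stability would have to be imposed implicitly for the $(g,t)$ regressions. Second, Assumption \ref{ass:pta_multi} conditions only on $X$, whereas the moderator-specific identification uses cells defined by $(X,Z)$. I would interpret $Z$ as included in $X$ so that Proposition \ref{prop:id_multi} applies.
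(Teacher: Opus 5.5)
Your Steps 1--3 are a fleshed-out version of the paper's proof. The paper only declares the argument analogous to that of Theorem~\ref{thm:asymptotic} with $\phi_{g,t,z}$ in place of $\phi_z$, and then applies the multivariate Lindeberg--L\'evy CLT over the finite set $\mathcal I$.

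Your remarks on the control indicator and on conditioning parallel trends on $Z$ identify real defects of the printed statement, and your fixes are the right ones. The check you propose in Step 1 actually fails for the displayed $\omega_1^{g,t}$. That weight uses $1-G_g$ and the full-population odds of $G_g$, so units from other cohorts act as controls. You must use $C$, the generalized odds $p_g/(1-p_g)$, and $m_0^{g,t}$ fitted on never-treated units. Multiplying the subsample score by $(G_g+C)/P(G_g+C=1)$ then gives exactly the full-sample score with $\pi_g=P(G_g=1)$.

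Your ``main obstacle'' is not an obstacle, and the cancellation mechanism you propose for it is wrong. Take the corrected score, plug in arbitrary cross-fitted nuisance estimates $\hat\eta$, and hold $\pi_g$ and $\gamma$ at their true values. The same iterated-expectation computation as in the appendix then gives the exact identity
\[
E_P\bigl[\phi_{g,t,z}(O;\hat\eta)\bigr]
=E\Bigl[\Bigl(1-\frac{p_Z}{\hat p_Z}\Bigr)\bigl(\hat\Delta_m-\Delta_m\bigr)\Bigm|G_g=1\Bigr]
+E\Bigl[\frac{\mathbf{1}_{\{Z=z\}}\,P(G_g+C=1\mid X,Z)}{\hat p_Z\,\pi_g}\,\bigl(m_0-\hat m_0\bigr)\,\frac{p_g-\hat p_g}{1-\hat p_g}\Bigr],
\]
where $p_Z=p_{Z\mid G_g=1,W}(z\mid W)$ and $m_0=m_0^{g,t}$.

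This identity holds however $\hat\Delta_m$ was constructed. The first-stage error inside $\hat\Delta_m$ is simply part of $\hat\Delta_m-\Delta_m$. Its $\hat m_1$-component cannot cancel against the $\omega_1$ term, because that term never involves $\hat m_1$. It is small only because it multiplies the error in $\hat p_Z$.

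Assumption~\ref{ass:product_rate} is stated directly for $\hat\Delta_m$. Stability therefore only serves to verify that rate from primitive conditions, and doing so produces cross products such as $\|\hat p_Z-p_Z\|\,\|\hat m_d-m_d\|$, not a cancellation. Its omission from the theorem is not a gap.

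Two things you do need:
\begin{itemize}
\item $\hat\Delta_m$, including its second-stage fit, must use only $\mathcal I_k^c$. If it were fitted on the evaluation fold, your conditional-variance argument for $T_1$ would break.
\item The estimated $\hat p_Z$ and $1-\hat p_g$ must be bounded away from zero. Assumption~\ref{ass:overlap_multi} bounds only the true nuisances, not their estimates.
\end{itemize}
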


\begin{proof}
Analogous to the proof of Theorem \ref{thm:asymptotic}, replacing $\phi_z$ with $\phi_{g,t,z}$ and applying the multivariate Lindeberg--L\'{e}vy central limit theorem jointly over the finite index set $\mathcal{I}$.
\end{proof}

\noindent A consistent estimator of $\Sigma$ is given by 
$\hat{\Sigma} = \mathbb{E}_n[\hat{\boldsymbol{\phi}}(O) \hat{\boldsymbol{\phi}}(O)']$, where $\hat{\boldsymbol{\phi}}$ denotes the vector of estimated influence functions obtained by replacing all nuisance functions with their cross-fitted estimates. Pointwise confidence intervals follow immediately from Theorem \ref{thm:asymptotic_multi}. 

Pointwise confidence intervals for the aggregate parameters as discussed in Section \ref{sec:aggregation} follow by the delta method.
To see this, write a generic aggregate as \[ \theta=h(\mathbf{\Gamma}^B,\mathbf w), \qquad \widehat\theta=h(\widehat{\mathbf{\Gamma}}^B,\widehat{\mathbf w}), \] where $\mathbf w$ stacks the population aggregation weights. If \[ \sqrt n(\widehat{\mathbf w}-\mathbf w) = n^{-1/2}\sum_{i=1}^n \boldsymbol{\xi}^w(O_i)+o_p(1), \] and $h$ is continuously differentiable, then the aggregate influence function is \[ \ell_\theta(O) = \nabla_{\Gamma}h(\mathbf{\Gamma}^B,\mathbf w)' \boldsymbol{\phi}(O) + \nabla_w h(\mathbf{\Gamma}^B,\mathbf w)' \boldsymbol{\xi}^w(O). \] Hence \[ \sqrt n(\widehat\theta-\theta) = n^{-1/2}\sum_{i=1}^n \ell_\theta(O_i)+o_p(1), \qquad \widehat{SE}(\widehat\theta) = \left\{\mathbb E_n[\widehat\ell_\theta(O)^2]/n\right\}^{1/2}. \] For linear aggregates, the product of the estimated-weight and estimated-BGATT errors is second order, so uncertainty in both components enters through $\ell_\theta$.

Simultaneous inference, which is more appropriate given the multiple testing nature of the problem, is discussed in the next section (Section \ref{sec:multiplier_bootstrap}).

\bigskip

\subsection{Multiplier Bootstrap}
\label{sec:multiplier_bootstrap}

Similar to \cite{callaway2021difference}, we use a multiplier bootstrap to approximate the joint distribution of the group-time BGATT estimators and smooth aggregate parameters. Let $\{V_i\}_{i=1}^n$ be i.i.d. multipliers, independent of the data, with $E[V_i]=0$, $E[V_i^2]=1$, and finite moments of sufficiently high order. Define
\[
    \widehat{\mathbb G}_n^*
    =
    \frac{1}{\sqrt n}
    \sum_{i=1}^n V_i \widehat{\boldsymbol\phi}(O_i),
\]
where $\widehat{\boldsymbol\phi}$ stacks the estimated influence functions
$\widehat\phi_{g,t,z}$ over $(g,t,z)\in\mathcal I$. Provided that
$\mathbb E_n\|\widehat{\boldsymbol\phi}(O)-\boldsymbol\phi(O)\|^2=o_p(1)$, the multiplier-bootstrap argument of \cite{callaway2021difference} applied to Theorem \ref{thm:asymptotic_multi} yields
\[
    \widehat{\mathbb G}_n^*
    \rightsquigarrow_* N(0,\Sigma),
\]
where $\rightsquigarrow_*$ denotes weak convergence of the bootstrap law in probability, conditional on the original sample. Thus the bootstrap consistently approximates the joint limiting distribution of
$\sqrt n(\widehat{\mathbf\Gamma}^B-\mathbf\Gamma^B)$.

For a smooth aggregate parameter $\theta=h(\mathbf\Gamma^B,\mathbf w)$, the same construction is applied to the aggregate influence function $\ell_\theta$:
\[
    \widehat{\mathbb G}_{n,\theta}^*
    =
    \frac{1}{\sqrt n}
    \sum_{i=1}^n V_i \widehat\ell_\theta(O_i).
\]
Repeating this procedure $B$ times gives critical values for pointwise or simultaneous inference. For example, for a finite collection $\mathcal J$ of group-time parameters, simultaneous bands are based on the empirical $(1-\alpha)$-quantile of
\[
    \max_{j\in\mathcal J}
    \left|
    \frac{\widehat{\mathbb G}_{n,j}^{*,b}}
    {\widehat\sigma_j}
    \right|,
    \qquad
    \widehat\sigma_j^2=\mathbb E_n[\widehat\phi_j(O)^2].
\]
For aggregate parameters, replace $\widehat\phi_j$ and $\widehat\sigma_j$ with the corresponding $\widehat\ell_{\theta_j}$ and
$\{\mathbb E_n[\widehat\ell_{\theta_j}(O)^2]\}^{1/2}$. The resulting confidence bands are asymptotically valid uniformly over any finite collection of parameters considered.

\bigskip

% --------------------------------------------------
% Simulation Study
% --------------------------------------------------
\section{Simulation Study}
\label{sec:simulations}

We assess the finite-sample performance of the proposed DiBGATT estimator in a simulation study designed to capture the main challenges motivating our approach: covariate imbalance across groups, high-dimensional adjustment, and treatment-effect heterogeneity along the balancing variables. The section first describes the data-generating process, then introduces the simulation scenarios, and finally reports the resulting bias, dispersion, and coverage patterns.

\medskip

\subsection{DGP}

This subsection describes the data-generating process used in the simulation study. The design captures high-dimensional controls, group-level treatment-effect heterogeneity, and covariate imbalance.

Let $V_i \in \mathbb{R}^p$ denote a vector of high-dimensional control covariates,
\[
V_i \sim \mathcal{N}(0,I_p).
\]
The effect of $V_i$ on treatment assignment is governed by a sparse coefficient vector
\[
\beta_V = \frac{1}{s}\big(s, s-1, \dots, 1, 0, \dots, 0\big)^\top,
\]
where only the first $s$ entries are nonzero ($s=5$ in the baseline specification).

The group indicator is
\[
Z_i \sim \text{Bernoulli}(0.5),
\]
and the balancing covariates $W_i \in \mathbb{R}^q$ are generated conditional on $Z_i$ as
\[
W_i \mid Z_i \sim \mathcal{N}\big(Z_i \mathbf{1}_q,\, 4 I_q\big),
\]
so that the distribution of $W_i$ differs across groups. The corresponding coefficient vector is
\[
\delta = \frac{1}{\sum_{j=1}^q j}\,(q, q-1, \dots, 1)^\top,
\]
assigning decreasing weights to the components of $W_i$.

Treatment assignment follows a logistic model,
\[
\Pr(D_i = 1 \mid V_i, W_i, Z_i)
= \sigma\big(V_i^\top \beta_V + W_i^\top \delta + Z_i - 1\big),
\]
where $\sigma(x) = (1+e^{-x})^{-1}$, inducing selection on both $V_i$ and $W_i$.

Individual treatment effects are given by a function $\theta(Z_i,W_i)$. We consider two specifications:
\[
\theta_{\text{add}}(Z_i,W_i)
= 3(1+Z_i) + W_i^\top \delta,
\]
\[
\theta_{\text{int}}(Z_i,W_i)
= 3(1+Z_i) + W_i^\top \delta + Z_i\, W_i^\top \delta,
\]
where the latter introduces interactions between group status and $W_i$.

Outcomes follow a two-period DiD structure. Untreated potential outcomes evolve according to a common time trend,
\[
Y_{i,01} = Y_{i,00} + 1 + \varepsilon_{i,10}, \qquad
\varepsilon_{i,10} \sim \mathcal{N}(0,0.1^2),
\]
while treated potential outcomes in period 1 are
\[
Y_{i,11} = Y_{i,01} + \theta(Z_i,W_i) + \varepsilon_{i,11}, \qquad
\varepsilon_{i,11} \sim \mathcal{N}(0,0.1^2).
\]
Observed outcomes are
\[
Y_{i0} = Y_{i,00}, \qquad
Y_{i1} = (1-D_i) Y_{i,01} + D_i Y_{i,11}.
\]

This DGP satisfies conditional parallel trends by construction.

\bigskip

\subsection{Simulation scenarios}
\label{subsec:scenarios}

We report simulation results for two high-dimensional designs with multidimensional covariate balancing. In all reported designs, the balancing covariate has four components, $\dim(W)=4$, and the additional control vector is high-dimensional with $p=300$.

The first design uses the additive treatment-effect specification $\theta_{\mathrm{add}}(Z,W)$ and fixes the sample size at $n=1600$. This design isolates the role of nuisance estimation in a high-dimensional setting. We compare linear nuisance models, estimated by least squares and logistic regression, with cross-validated $\ell_1$-penalized nuisance models. Results are based on 5000 Monte Carlo replications.

The second design uses the interactive treatment-effect specification $\theta_{\mathrm{int}}(Z,W)$, so that treatment effects vary with both group status and the balancing covariates. We estimate nuisance functions using cross-validated $\ell_1$-penalized methods and compare performance at $n=400$ and $n=1600$. This design assesses the behavior of DiBGATT in a more demanding setting with high-dimensional controls, covariate imbalance, and treatment-effect heterogeneity involving $W$. For each design, we report the sampling distribution of the DiBGATT estimator, together with the true value, empirical RMSE, and coverage of nominal 95\% confidence intervals.

\bigskip

\subsection{Simulation results}
\label{subsec:simulation_results}

Figure \ref{fig:sim_simple_nuisance} compares nuisance-estimation methods in the high-dimensional additive design. Linear nuisance models display greater dispersion and finite-sample distortion, whereas cross-validated lasso produces a sampling distribution that is more tightly centered around the true value. This comparison illustrates the value of regularization even when treatment-effect heterogeneity is additive.

Figure \ref{fig:sim_interactive_results} reports results for the interactive design. With $n=400$, the sampling distribution is more dispersed, reflecting the difficulty of estimating treatment-effect heterogeneity involving $W$ in the presence of high-dimensional controls. Increasing the sample size to $n=1600$ substantially tightens the distribution and improves coverage. Overall, the simulations support the main theoretical message: DiBGATT delivers reliable inference in high-dimensional DiD settings when nuisance components are estimated with regularized methods, while performance improves as the sample size increases.

\begin{figure}[htbp]
    \centering

    \begin{subfigure}{0.48\textwidth}
        \centering
        \includegraphics[width=\linewidth]{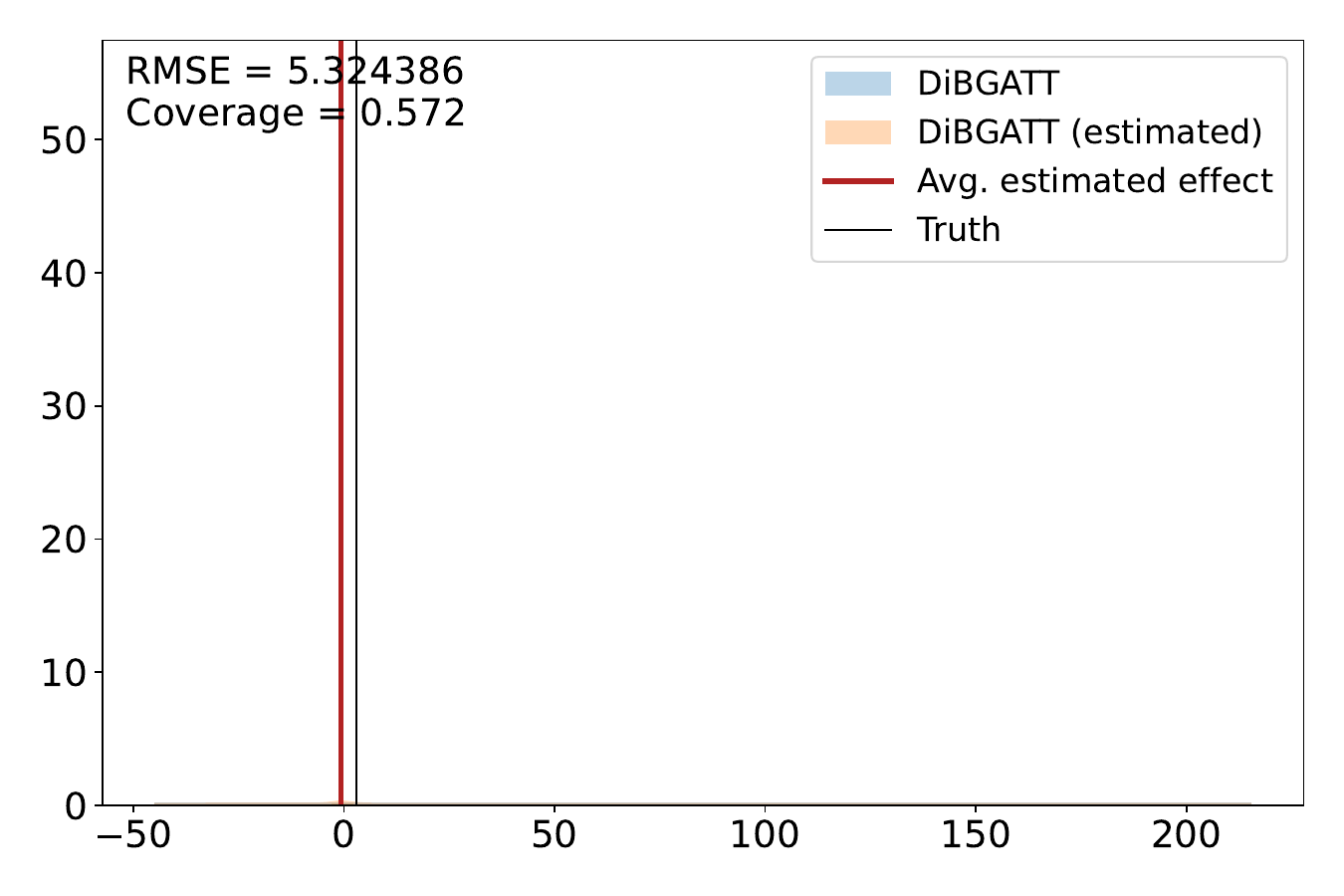}
        \caption{Simple, $\dim(W)=4$, $p=300$, $n=1600$, linear}
        \label{fig:sim_simple_linear}
    \end{subfigure}\hfill
    \begin{subfigure}{0.48\textwidth}
        \centering
        \includegraphics[width=\linewidth]{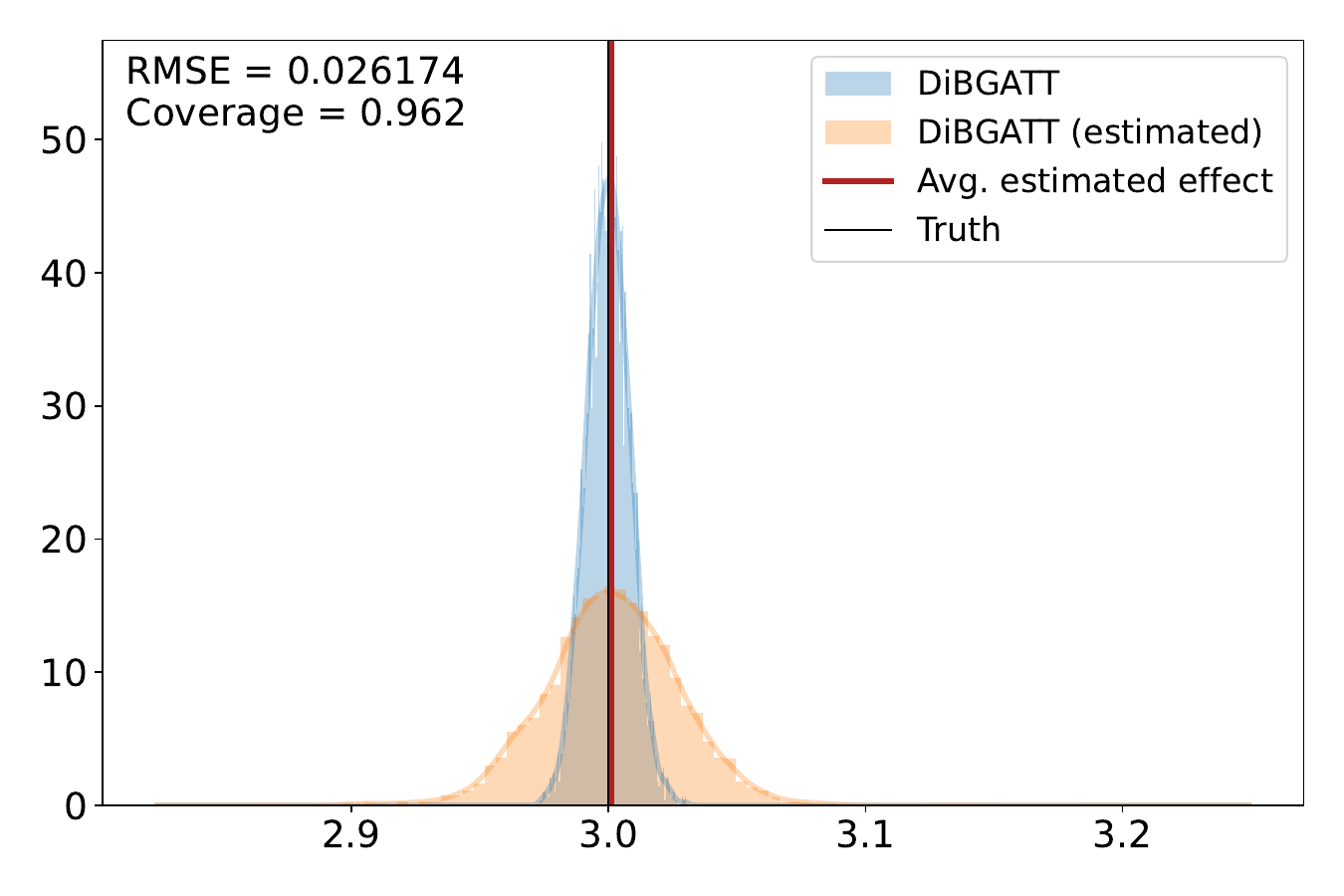}
        \caption{Simple, $\dim(W)=4$, $p=300$, $n=1600$, lasso}
        \label{fig:sim_simple_lasso}
    \end{subfigure}

    \caption{Sampling distributions of DiBGATT in the high-dimensional additive treatment-effect design, based on 5000 simulations.}
    \label{fig:sim_simple_nuisance}
\end{figure}

\begin{figure}[htbp]
    \centering

    \begin{subfigure}{0.48\textwidth}
        \centering
        \includegraphics[width=\linewidth]{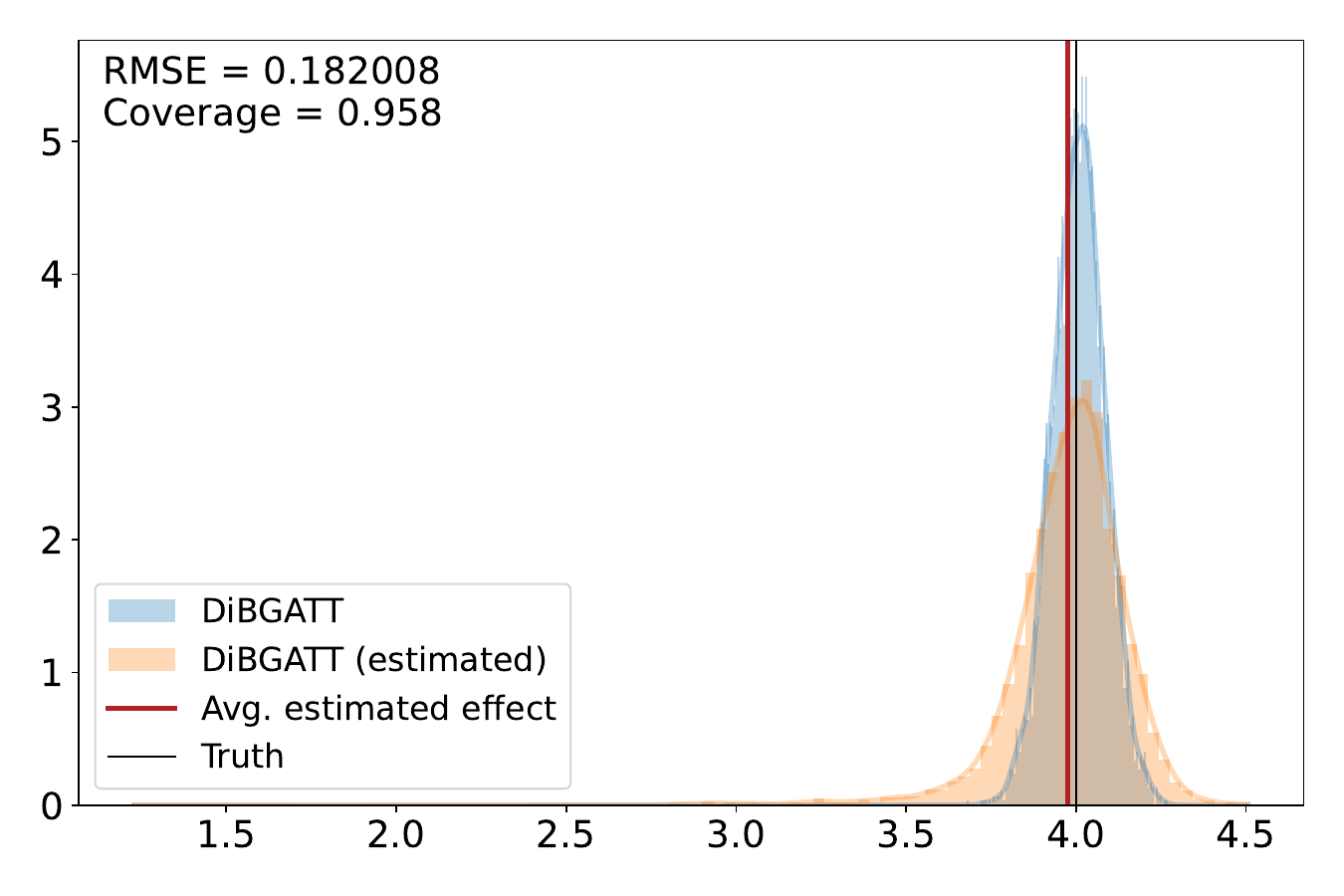}
        \caption{Interactive, $\dim(W)=4$, $p=300$, $n=400$}
        \label{fig:sim_interactive_n400}
    \end{subfigure}\hfill
    \begin{subfigure}{0.48\textwidth}
        \centering
        \includegraphics[width=\linewidth]{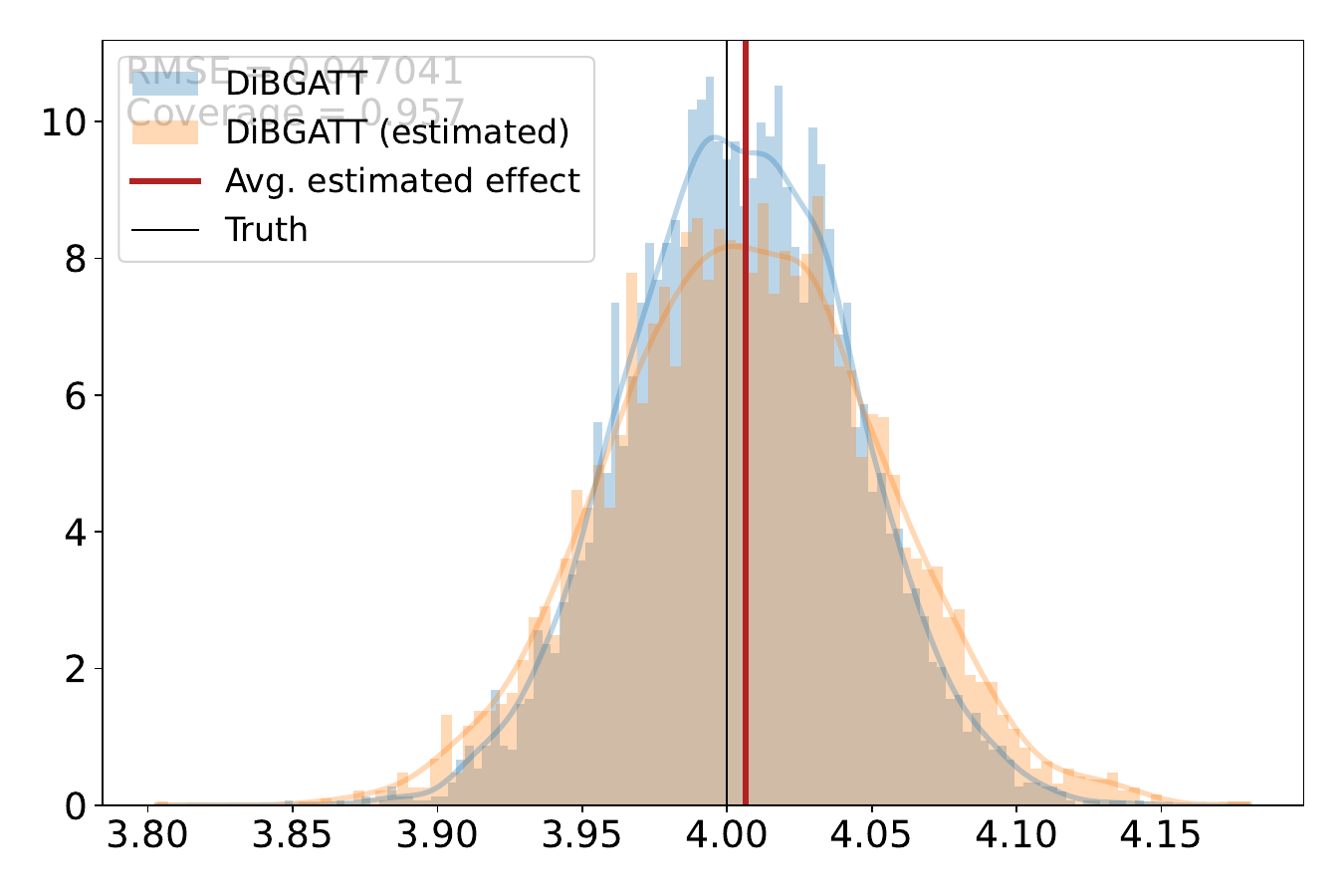}
        \caption{Interactive, $\dim(W)=4$, $p=300$, $n=1600$}
        \label{fig:sim_interactive_n1600}
    \end{subfigure}

    \caption{Sampling distributions of DiBGATT in the high-dimensional interactive treatment-effect design using lasso nuisance estimation.}
    \label{fig:sim_interactive_results}
\end{figure}

\bigskip

% --------------------------------------------------
% Empirical Applications
% --------------------------------------------------
\section{Empirical Application}
\label{sec:application}

\textit{Work in progress.}

\bigskip

% --------------------------------------------------
% Conclusion
% --------------------------------------------------
\section{Conclusion}
\label{sec:conclusion}

This paper introduces the Balanced Group Average Treatment Effect on the Treated (BGATT) and its difference across groups, DiBGATT, to study treatment effect heterogeneity in Difference-in-Differences designs. By equalizing covariate distributions across groups, the proposed framework isolates heterogeneity in treatment responses from compositional differences and enables direct, well-powered inference on group-level effects. We establish identification under standard parallel-trends assumptions, derive an efficient influence-function representation, and develop semiparametric estimation and inference procedures suitable for high-dimensional settings. Nuisance models can be estimated with flexible machine learning methods since our representation is cross-doubly-robust.

% --------------------------------------------------
% References
% --------------------------------------------------
\newpage
\bibliography{references}

% --------------------------------------------------
% Appendices
% --------------------------------------------------
\newpage
\begin{appendices}

\allowdisplaybreaks

\section{Proofs}
\label{app:proofs}

\subsection{Proof of Proposition \ref{prop:identification_bgatt}}
\label{app:proof_identification}

\textbf{Identification through outcome regression} 

\noindent Define $\mu_{d,t}(z,x)=E[Y_t|D=d,Z=z,X=x]$. Under assumption of parallel trends and using that $Y_1=Y_1(1)\cdot D + Y_1(0) \cdot (1-D)$ (SUTVA), implying that $Y_1(1)=Y_1$ if $D=1$ and $Y_1(0)=Y_1$ if $D=0$:
\begin{align*}
    & \gamma^B(z) \\
    &= E[E[Y_1(1)- Y_1(0)|D=1,Z=z,W] | D = 1] \\
    &= E[E[E[Y_1(1)- Y_1(0)|D = 1, X,Z]|D=1, Z=z,W]| D = 1] \\
    &= E[E[E[Y_1(1)|D = 1, X,Z]- E[Y_1(0)|D = 1, X,Z]|D=1,Z=z,W]| D = 1] \\
    &= E[E[E[Y_1(1)|D = 1, X,Z] - E[Y_0(0)|D = 1, X,Z] \\ 
    & \quad - E[Y_1(0)|D=0,X,Z] - E[Y_0(0)|D=0,X,Z]|D=1,Z=z,W]| D = 1] \\
    &= E[E[E[Y_1|D=1,X,Z] - E[Y_0|D = 1, X,Z]] \\ 
    & \quad - E[Y_1|D=0,X,Z] - E[Y_0|D = 1, X,Z]|D=1,Z=z,W]| D = 1] \\
    &= E[E[E[Y_1 - Y_0|D = 1, X,Z] - E[Y_1 - Y_0|D=0,X,Z] |D=1,Z=z,W]| D = 1] \\
    &= E[E[m_1(x,z)- m_0(x,z)|D=1,Z=z,W]| D = 1]
\end{align*}
The first equality comes from the law of iterated expectations. The third equality from the common trends assumption, the fourth equality from the no anticipation assumption and SUTVA.

\bigskip

\subsection{Proof of Theorem \ref{thm:if_bgatt}}
\label{app:derivation_if}

We start by defining the target estimand under a discrete data-generating process to simplify exposition. We then leverage the interpretation of the influence function as a functional derivative to derive the influence function for our estimand by combining known influence functions of its building blocks\footnote{This approach is based on strategy 2 of \cite{kennedy2024semiparametric}.}.
For group $Z=z$, the target estimand is the DiBGATT, defined as 
\begin{align*}
    \gamma^B(z) = E \left[ E \left[ E \left[ Y_1 - Y_0 | D=1, X, z \right] - E \left[ Y_1 - Y_0 | D=0, X, z \right] | D=1 , Z=z , W \right] | D=1 \right]
\end{align*}

We will first derive influence function for the BGATT, $E \left[ E \left[m_1(X,z) | D=1 , Z=z , W \right] | D=1 \right]$ denoted $\gamma^B_1(z)$, then for the BGATT $E \left[ E \left[m_1(X,z) | D=1 , Z=z , W \right] | D=1 \right]$ denoted $\gamma^B_0(z)$, since their difference gives the DiBGATT $\gamma^B(z) = \gamma^B_1(z)-\gamma^B_0(z)$ and due to linearity it follows that the DiBGATT's influence function simply is $IF(\gamma^B(z)) = IF(\gamma^B_1(z))-IF(\gamma^B_0(z))$.

\bigskip

\subsubsection{Derivation of $\gamma^B_1(z)$}

$\gamma^B_1(z)$ can be written as: 
\begin{align*}
    \gamma^B_1(z) = \int \left[ \int m_1(x, z) \, dP_{X \mid D=1, Z=z, W=w}(x) \right]
\, dP_{W \mid D=1}(w).
\end{align*}
We start by assuming that the data are discrete to simplify the exposition. Under discreteness, expressions involving sums of indicator functions collapse—for example, $\sum_x \mathbf{1}_{\{X = x\}} \, p(x) = p(X),$
which streamlines the derivations. The influence functions derived under this assumption are the same as those for continuous data:
\begin{align*}
    \gamma^B_1(z) = \sum_{w} \sum_{x} m_1(x,z)\, p_{X\mid D=1,Z=z,W=w}(x)\, p_{W\mid D=1}(w).
\end{align*}
Now use product rule for derivatives, $\mathbb{IF}(\gamma_1\gamma_2)=\mathbb{IF}(\gamma_1)\gamma_2+\gamma_1\mathbb{IF}(\gamma_2)$:
\begin{align*}
    & \mathbb{IF}(\gamma^B_1(z)) \\ &= \sum_w \sum_x \mathbb{IF}(m_1(x,z)) \ p_{X \mid D = 1, Z = z, W = w}(x) \ p_{W \mid D=1}(w) \\
    & \quad + \sum_w \sum_x m_1(x,z) \ \mathbb{IF}(p_{X \mid D = 1, Z = z, W = w}(x)) \ p_{W \mid D=1}(w) \\
    & \quad + \sum_w \sum_x m_1(x,z) \ p_{X \mid D = 1, Z = z, W = w}(x) \ \mathbb{IF}(p_{W \mid D=1}(w)) \\
    & = \frac{D}{p_D(1)} \Biggl\{
\underbrace{\sum_w \sum_x \frac{\mathbf{1}_{\{X=x,Z=z\}}}{p_{X,Z \mid D=1}(x,z)} (Y_1-Y_0-m_1(x,z)) \, p_{X \mid D=1, Z=z, W=w}(x) \ p_{W \mid D=1}(w)}_{\text{Term (a)}} \\
&\quad + \underbrace{\sum_w \sum_x m_1(x,z) \, \frac{\mathbf{1}_{\{Z=z,W=w\}}}{p_{Z,W \mid D=1}(z,w)} \big(\mathbf{1}_{\{X=x\}}-p_{X \mid D=1, Z=z, W=w}(x)\big) \, p_{W \mid D=1}(w)}_{\text{Term (b)}} \\
&\quad + \underbrace{\sum_w \sum_x m_1(x,z) \, p_{X \mid D=1, Z=z, W=w}(x) \ \big(\mathbf{1}_{\{W=w\}}-p_{W \mid D=1}(w)\big)}_{\text{Term (c)}}
\Biggr\}.
\end{align*}

\begin{align*}
\mathbb{IF}(\gamma^B_1(z))
&= \sum_w \sum_x \mathbb{IF}\!\left(m_1(x,z)\right)\,
p_{X\mid D=1,Z=z,W=w}(x)\,
p_{W\mid D=1}(w) \\
&\quad + \sum_w \sum_x
m_1(x,z)\,
\mathbb{IF}\!\left(p_{X\mid D=1,Z=z,W=w}(x)\right)\,
p_{W\mid D=1}(w) \\
&\quad + \sum_w \sum_x
m_1(x,z)\,
p_{X\mid D=1,Z=z,W=w}(x)\,
\mathbb{IF}\!\left(p_{W\mid D=1}(w)\right) \\
&= \frac{D}{p_D(1)} \Biggl\{
\underbrace{
\sum_w \sum_x
\frac{\mathbf{1}_{\{X=x,Z=z\}}}{p_{X,Z\mid D=1}(x,z)}
\big( Y_1-Y_0-m_1(x,z) \big)\,
p_{X\mid D=1,Z=z,W=w}(x)\,
p_{W\mid D=1}(w)
}_{\text{Term (a)}} \\
&\quad + \underbrace{
\sum_w \sum_x
m_1(x,z)\,
\frac{\mathbf{1}_{\{Z=z,W=w\}}}{p_{Z,W\mid D=1}(z,w)}
\Big(
\mathbf{1}_{\{X=x\}} - p_{X\mid D=1,Z=z,W=w}(x)
\Big)\,
p_{W\mid D=1}(w)
}_{\text{Term (b)}} \\
&\quad + \underbrace{
\sum_w \sum_x
m_1(x,z)\,
p_{X\mid D=1,Z=z,W=w}(x)\,
\Big(
\mathbf{1}_{\{W=w\}} - p_{W\mid D=1}(w)
\Big)
}_{\text{Term (c)}}
\Biggr\}.
\end{align*}

We now analyze each term separately.

\noindent \textbf{Term (a)}:
\begin{align*}
    & \sum_w \sum_x \frac{\mathbf{1}_{\{X=x,Z=z\}}}{p_{X,Z \mid D=1}(x,z)} (Y_1-Y_0-m_1(x,z)) \ p_{X \mid D=1, Z=z, W=w}(x) \ p_{W \mid D=1}(w) \\
    & = \sum_w \frac{\mathbf{1}_{\{Z=z\}}}{p_{X,Z \mid D=1}(x,z)} (Y_1-Y_0-m_1(X,z)) \ p_{X \mid D=1, Z=z, W=w}(x) \ p_{W \mid D=1}(w) \\
    & = \frac{\mathbf{1}_{\{Z=z\}}}{p_{X,Z \mid D=1}(x,z)} (Y_1-Y_0-m_1(X,z)) \sum_w  \ p_{X \mid D=1, Z=z, W=w}(x) \ p_{W \mid D=1}(w) \\
    & = \frac{\mathbf{1}_{\{Z=z\}}}{p_{V,W,Z \mid D=1}(v,w,z)} (Y_1-Y_0-m_1(X,z)) \sum_w \mathbf{1}_{\{W=w\}} \ p_{V \mid D=1, Z=z, W=w}(v) \ p_{W \mid D=1}(w)  \\
    & = \frac{\mathbf{1}_{\{Z=z\}} \ p_{V \mid D=1, Z=z, W=w}(v) \ p_{W \mid D=1}(w)}{p_{V \mid D=1, W, Z=z}(v \mid w) \ p_{W \mid D=1, Z=z}(w) \ p_{Z \mid D=1}(z)} (Y_1-Y_0-m_1(X,z))   \\
    & = \frac{\mathbf{1}_{\{Z=z\}} \ p_{W \mid D=1}(w)}{p_{W \mid D=1, Z=z}(w) \ p_{Z \mid D=1}(z)} (Y_1-Y_0-m_1(X,z)) \\
    & = \frac{\mathbf{1}_{\{Z=z\}}}{p_{Z \mid D=1, W}(z \mid w)} (Y_1-Y_0-m_1(X,z))
\end{align*}
Using for the third equality that $X=(V,W)$ and that
\begin{align*}
    p_{X \mid D=1, Z=z, W=w}(x) = p_{V,W \mid D=1, Z=z, W=w}(v,w)
    = \left\{
    \begin{aligned}
    &p_{V \mid D=1, Z=z, W}(v \mid w)  &&\text{if } w = W,\\[4pt]
    &0                      &&\text{if } w \neq W,
    \end{aligned}
    \right.
\end{align*}
and for the last equality that $p_{Z,W \mid D=1}(z,w)=p_{W \mid D=1}(w) \ p_{Z \mid D=1, W}(z \mid w)$.

\noindent \textbf{Term (b)}:
\begin{align*}
    & \sum_w \sum_x m_1(x,z) \, \frac{\mathbf{1}_{\{Z=z,W=w\}}}{p_{Z,W \mid D=1}(z,w)} \big(\mathbf{1}_{\{X=x\}}-p_{X \mid D=1, Z=z, W=w}(x)\big) \, p_{W \mid D=1}(w) \\
    & = m_1(X,z) \frac{\mathbf{1}_{\{Z=z\}}}{p_{Z,W \mid D=1}(z,w)} \ p_{W \mid D=1}(w) \\
    & \quad - \sum_x m_1(x,z) \frac{\mathbf{1}_{\{Z=z\}}}{p_{Z,W \mid D=1}(z,w)} \ p_{Z,W \mid D=1}(z,w) \ p_{W \mid D=1}(w) \\
    & = \frac{\mathbf{1}_{\{Z=z\}} \ p_{W \mid D=1}(w)}{p_{Z,W \mid D=1}(z,w)} \ m_1(X,z) - \frac{\mathbf{1}_{\{Z=z\}} \ p_{W \mid D=1}(w)}{p_{Z,W \mid D=1}(z,w)} \sum_x m_1(x,z) \ p_{Z,W \mid D=1}(z,w) \\
    & = \frac{\mathbf{1}_{\{Z=z\}}}{p_{Z \mid D=1, W}(z \mid w)} \cdot \left( m_1(X,z) - E[m_1(X,z)|D=1,Z=z,W] \right)
\end{align*}
\medskip

\noindent \textbf{Term (c)}:
\begin{align*}
    & \sum_w \sum_x m_1(x,z) \, p_{X \mid D=1, Z=z, W=w}(x) \, \big(\mathbf{1}_{\{W=w\}}-p_{W \mid D=1}(w)\big) \\
    &= \sum_x m_1(x,z) \ p_{Z,W \mid D=1}(z,w) - \sum_w \sum_x m_1(x,z) \ p_{X \mid D = 1, Z = z, W = w}(x) \ p_{W \mid D=1}(w) \\
    &= E[m_1(x,z)|D=1,Z=z,W] - \gamma^B_1(z)
\end{align*}

Now putting terms (a), (b), and (c) back into the expression for $\mathbb{IF}(\gamma^B_1(z))$ above:
\begin{align*}
    & \mathbb{IF}(\gamma^B_1(z)) \\ &= \frac{D}{p_D(1)} \Biggl\{ \frac{\mathbf{1}_{\{Z=z\}}}{p_{Z \mid D=1, W}(z \mid w)} (Y_1-Y_0-m_1(X,z)) \\
    & \quad +  \frac{\mathbf{1}_{\{Z=z\}}}{p_{Z \mid D=1, W}(z \mid w)} ( m_1(X,z) - E[m_1(X,z)|D=1,Z=z,W)] \\
    & \quad + E[m_1(x,z)|D=1,Z=z,W] - \gamma^B_1(z) \Biggr\} \\
    &= \frac{D}{p_D(1)} \Biggl\{ \frac{\mathbf{1}_{\{Z=z\}}}{p_{Z \mid D=1, W}(z \mid w)} (Y_1-Y_0-m_1(X,z) + m_1(X,z)) \\
    & \quad +  \left( 1-\frac{\mathbf{1}_{\{Z=z\}}}{p_{Z \mid D=1, W}(z \mid w)} \right) \ E[m_1(X,z)|D=1,Z=z,W]  - \gamma^B_1(z) \Biggr\} \\
    &=  \frac{\mathbf{1}_{\{Z=z\}} \ D}{p_{Z \mid D=1, W}(z \mid w) \ p_D(1)} (Y_1 - Y_0) \\
    & \quad + \left( \frac{D}{p_D(1)}  - \frac{\mathbf{1}_{\{Z=z\}} \ D}{p_{Z \mid D=1, W}(z \mid w) \ p_D(1)}\right)  \ E[m_1(X,z)|D=1,Z=z,W] \\
    & \quad - \frac{D}{p_D(1)} \gamma^B_1(z)
\end{align*}

\bigskip

\subsubsection{Derivation of $\gamma^B_0(z)$}

$\gamma^B_0(z)$ can be written as: 
\begin{align*}
    \gamma^B_0(z) = \int \left[ \int m_0(x, z) \, dP_{X \mid D=1, Z=z, W=w}(x) \right]
\, dP_{W \mid D=1}(w).
\end{align*}
We again start by assuming that the data are discrete to simplify the exposition: 
\begin{align*}
    \gamma^B_0(z) = \sum_w \sum_x m_0(x,z) \ p_{X \mid D = 1, Z = z, W = w}(x) \ p_{W \mid D=1}(w)
\end{align*}
Now we use product rule for derivatives, $\mathbb{IF}(\gamma_1\gamma_2)=\mathbb{IF}(\gamma_1)\gamma_2+\gamma_1\mathbb{IF(\gamma_2)}$:
\begin{align*}
    & \mathbb{IF}(\gamma^B_0(z)) \\
    &= \sum_w \sum_x \mathbb{IF}(m_0(x,z)) \ p_{X \mid D = 1, Z = z, W = w}(x) \ p_{W \mid D=1}(w) \\
    & \quad + \sum_w \sum_x m_0(x,z) \ \mathbb{IF}(p_{X \mid D = 1, Z = z, W = w}(x)) \ p_{W \mid D=1}(w) \\
    & \quad + \sum_w \sum_x m_0(x,z) \ p_{X \mid D = 1, Z = z, W = w}(x) \ \mathbb{IF}(p_{W \mid D=1}(w)) \\
    & = \underbrace{\sum_w \sum_x \frac{\mathbf{1}_{\{D=0,X=x,Z=z\}}}{p_{D,X,Z}(0,x,z)} (Y_1-Y_0-m_0(x,z)) \ p_{X \mid D=1, Z=z, W=w}(x) \ p_{W \mid D=1}(w)}_{\text{Term (d)}} \\
    &\quad + \underbrace{\sum_w \sum_x m_0(x,z) \, \frac{\mathbf{1}_{\{Z=z,W=w\}} \ D}{p_{D,Z,W}(1,z,w)} \big(\mathbf{1}_{\{X=x\}}-p_{X \mid D=1, Z=z, W=w}(x)\big) \, p_{W \mid D=1}(w)}_{\text{Term (e)}} \\
    &\quad + \underbrace{\sum_w \sum_x m_0(x,z) \ p_{X \mid D=1, Z=z, W=w}(x) \ \frac{D}{p_D(1)} \big(\mathbf{1}_{\{W=w\}}-p_{W \mid D=1}(w)\big)}_{\text{Term (f)}}.
\end{align*}
We again analyze each of the terms separately. \medskip

\noindent \textbf{Term (d)}:
\begin{align*}
    & \sum_w \sum_x \frac{\mathbf{1}_{\{D=0,X=x,Z=z\}}}{p_{D,X,Z}(0,x,z)} (Y_1-Y_0-m_0(x,z)) \ p_{X \mid D=1, Z=z, W=w}(x) \ p_{W \mid D=1}(w) \\
    &= \frac{(1-D)}{p_D(0)} \sum_w \frac{\mathbf{1}_{\{Z=z\}}}{p_{X,Z \mid D=0}(x,z)} \ (Y_1 - Y_0 - m_0(X,z)) \ p_{X \mid D = 1, Z = z, W = w}(x) \ p_{W \mid D=1}(w) \\
    &= \frac{(1-D)}{p_D(0)} \frac{\mathbf{1}_{\{Z=z\}}}{p_{X,Z \mid D=0}(x,z)} \ (Y_1 - Y_0 - m_0(X,z)) \sum_w \mathbf{1}_{\{W=w\}} p_{X \mid D = 1, Z = z, W = w}(x) \ p_{W \mid D=1}(w) \\
    &= \frac{(1-D)}{p_D(0)} \frac{\mathbf{1}_{\{Z=z\}}}{p_{X,Z \mid D=0}(x,z)} \ (Y_1 - Y_0 - m_0(X,z)) \ p_{V \mid D=1, Z=z, W}(v \mid w) \ p_{W \mid D=1}(w) \cdot \frac{p_{W \mid D=1, Z=z}(w)}{p_{W \mid D=1, Z=z}(w)} \\
    &= \frac{(1-D)}{p_D(0)} \frac{\mathbf{1}_{\{Z=z\}}}{p_{X,Z \mid D=0}(x,z)} \ (Y_1 - Y_0 - m_0(X,z)) \ p_{X \mid D=1, Z=z}(x) \frac{p_{W \mid D=1}(w)}{p_{W \mid D=1, Z=z}(w)} \\
    &=\frac{\mathbf{1}_{\{Z=z\}} (1-D)}{p_{Z \mid D=0}(z)p_D(0)} (Y_1 - Y_0 - m_0(X,z)) \frac{p_{X \mid D=1, Z=z}(x)}{p_{X \mid D=0, Z=z}(x)} \ \frac{p_{W \mid D=1}(w)}{p_{W \mid D=1, Z=z}(w)} \\
    &= \frac{\mathbf{1}_{\{Z=z\}} (1-D)}{p_{Z \mid D=0}(z)p_D(0)} (Y_1 - Y_0 - m_0(X,z)) \frac{p_{D \mid X, Z=z}(1 \mid x)}{p_{D \mid X, Z=z}(0 \mid x)} \frac{p_{D \mid Z=z}(0)}{p_{D \mid Z=z}(1)} \frac{p_{Z \mid D=1}(z)}{p_{Z \mid D=1, W}(z \mid w)}
\end{align*}
Using for the second equality that $X=(V,W)$ and that
\begin{align*}
    p_{X \mid D=1, Z=z, W=w}(x) = p_{V,W \mid D=1, Z=z, W=w}(v,w)
    = \left\{
    \begin{aligned}
    &p_{V \mid D=1, Z=z, W}(v \mid w)  &&\text{if } w = W,\\[4pt]
    &0                      &&\text{if } w \neq W,
    \end{aligned}
    \right.
\end{align*}
and the last equality follows from
\begin{align*}
    \frac{p_{X \mid D=1, Z=z}(x)}{p_{X \mid D=0, Z=z}(x)} &= \frac{\frac{p_{D \mid X, Z=z}(1 \mid x) \ p_{X \mid Z=z}(x)}{p_{D \mid Z=z}(1)}}{\frac{p_{D \mid X, Z=z}(0 \mid x)p_{X \mid Z=z}(x)}{p_{D \mid Z=z}(0)}} = \frac{p_{D \mid X, Z=z}(1 \mid x) \ p_{D \mid Z=z}(0)}{p_{D \mid X, Z=z}(0 \mid x) \ p_{D \mid Z=z}(1)}.
\end{align*}
Note that we here (as in many other places, check this) need positivity: $p(D=d|Z=z)>0$ and $p(D=d|X,Z=z)>0$.
The last equality follows from 
\begin{align*}
    & p_{W \mid D=1, Z=z}(w)=\frac{p_{W,Z \mid D=1}(w,z)}{p_{Z \mid D=1}(z)}=\frac{p_{Z \mid D=1, W}(z \mid w) \ p_{W \mid D=1}(w)}{p_{Z \mid D=1}(z)} \\
    & \Leftrightarrow \frac{p_{W \mid D=1}(w)}{p_{W \mid D=1, Z=z}(w)}=\frac{p_{Z \mid D=1}(z)}{p_{Z \mid D=1, W}(z \mid w)}
\end{align*}

\noindent \textbf{Term (e)}:
\begin{align*}
    & \sum_w \sum_x m_0(x,z) \, \frac{\mathbf{1}_{\{Z=z,W=w\}} \ D}{p_{D,Z,W}(1,z,w)} \big(\mathbf{1}_{\{X=x\}}-p_{X \mid D=1, Z=z, W=w}(x)\big) \, p_{W \mid D=1}(w) \\
    &= m_0(X,z) \frac{\mathbf{1}_{\{Z=z\}} \ D}{p_{D,Z,W}(1,z,w)} \ p_{W \mid D=1}(w) \\
    & \quad - \sum_x m_0(x,z) \frac{\mathbf{1}_{\{Z=z\}} \ D}{p_{D,Z,W}(1,z,w)} p_{X \mid D=1, Z=z, W}(x \mid w) \ p_{W \mid D=1}(w) \\
    &= m_0(X,z) \frac{\mathbf{1}_{\{Z=z\}} \ D}{p_{D,Z,W}(1,z,w)} \ p_{W \mid D=1}(w) \\
    & \quad - \frac{\mathbf{1}_{\{Z=z\}} \ D}{p_{D,Z,W}(1,z,w)} \ p_{W \mid D=1}(w)  \sum_x m_0(x,z) \  p_{X \mid D=1, Z=z, W}(x \mid w)\\
    &= \frac{\mathbf{1}_{\{Z=z\}} \ D}{p_{D,Z,W}(1,z,w)} \ p_{W \mid D=1}(w) (m_0(X,z)-E[m_0(X,z)|D=1,Z=z,W])
\end{align*}
\medskip

\noindent \textbf{Term (f)}:
\begin{align*}
    & \sum_w \sum_x m_0(x,z) \ p_{X \mid D=1, Z=z, W=w}(x) \ \frac{D}{p_D(1)} \big(\mathbf{1}_{\{W=w\}}-p_{W \mid D=1}(w)\big) \\
    &= \sum_x m_0(x,z) \ p_{X \mid D=1, Z=z, W}(x \mid w) \ \frac{D}{p_D(1)} \\
    & \quad - \sum_w \sum_x m_0(x,z) \ p_{X \mid D=1, Z=z, W=w}(x) \ \frac{D}{p_D(1)} \ p_{W \mid D=1}(w) \\
    &= \frac{D}{p_D(1)} \left( E[m_0(x,z)|D=1,Z=z,W] - \sum_w \sum_x m_0(x,z) p_{X \mid D=1, Z=z, W=w}(x)  \ p_{W \mid D=1}(w) \right) \\
    &= \frac{D}{p_D(1)} (E[m_0(x,z)|D=1,Z=z,W] - \gamma^B_0(z))
\end{align*}

Now putting terms (d), (e), and (f) back into the expression for $\mathbb{IF}(\gamma^B_0(z))$:
\begin{align*}
    & \mathbb{IF}(\gamma^B_0(z)) \\
    &= \frac{\mathbf{1}_{\{Z=z\}} (1-D)}{p_{Z \mid D=0}(z)p_D(0)} (Y_1 - Y_0 - m_0(X,z)) \frac{p_{D \mid X, Z=z}(1 \mid x)}{p_{D \mid X, Z=z}(0 \mid x)} \frac{p_{D \mid Z=z}(0)}{p_{D \mid Z=z}(1)} \frac{p_{Z \mid D=1}(z)}{p_{Z \mid D=1, W}(z \mid w)} \\
    & \quad + \frac{\mathbf{1}_{\{Z=z\}} \ D}{p_{D,Z,W}(1,z,w)} \ p_{W \mid D=1}(w) (m_0(X,z)-E[m_0(X,z)|D=1,Z=z,W]) \\
    & \quad + \frac{D}{p_D(1)} (E[m_0(x,z)|D=1,Z=z,W] - \gamma^B_0(z)) \\
    &= \frac{\mathbf{1}_{\{Z=z\}} (1-D)}{p_{Z \mid D=0}(z)p_D(0)} \ \frac{p_{D \mid X, Z=z}(1 \mid x)}{p_{D \mid X, Z=z}(0 \mid x)} \frac{p_{D \mid Z=z}(0)}{p_{D \mid Z=z}(1)} \frac{p_{Z \mid D=1}(z)}{p_{Z \mid D=1, W}(z \mid w)} \cdot (Y_1 - Y_0) \\
    & \quad + \left( \frac{\mathbf{1}_{\{Z=z\}} \ D}{p_D(1) \ p_{Z \mid D=1, W}(z \mid w)} - \frac{\mathbf{1}_{\{Z=z\}} (1-D)}{p_{Z \mid D=0}(z)p_D(0)} \ \frac{p_{D \mid X, Z=z}(1 \mid x)}{p_{D \mid X, Z=z}(0 \mid x)} \frac{p_{D \mid Z=z}(0)}{p_{D \mid Z=z}(1)} \frac{p_{Z \mid D=1}(z)}{p_{Z \mid D=1, W}(z \mid w)}   \right) \cdot m_0(X,z) \\
    & \quad + \left( \frac{D}{p_D(1)} - \frac{\mathbf{1}_{\{Z=z\}} \ D}{p_D(1) \ p_{Z \mid D=1, W}(z \mid w)} \right) \cdot E[m_0(X,z)|D=1,Z=z,W] \\
    & \quad - \frac{D}{p_D(1)} \gamma^B_0(z)
\end{align*}

\bigskip

\subsubsection{Combined IF}

Now we can combine the results of the previous two subsections due to linearity to obtain $\mathbb{IF}(\gamma^B(z))$:
\begin{align*}
    & \mathbb{IF}(\gamma^B(z)) = \mathbb{IF}(\gamma^B_1(z)) - \mathbb{IF}(\gamma^B_0(z)) \\
    &= \frac{\mathbf{1}_{\{Z=z\}} \ D}{p_{Z \mid D=1, W}(z \mid w) \ p_D(1)} (Y_1 - Y_0) \\
    & \quad + \left( \frac{D}{p_D(1)}  - \frac{\mathbf{1}_{\{Z=z\}} \ D}{p_{Z \mid D=1, W}(z \mid w) \ p_D(1)}\right)  \ E[m_1(X,z)|D=1,Z=z,W] - \frac{D}{p_D(1)} \gamma^B_1(z) \\
    & \quad - \frac{\mathbf{1}_{\{Z=z\}} (1-D)}{p_{Z \mid D=0}(z)p_D(0)} \ \frac{p_{D \mid X, Z=z}(1 \mid x)}{p_{D \mid X, Z=z}(0 \mid x)} \frac{p_{D \mid Z=z}(0)}{p_{D \mid Z=z}(1)} \frac{p_{Z \mid D=1}(z)}{p_{Z \mid D=1, W}(z \mid w)} \cdot (Y_1 - Y_0) \\
    & \quad - \left( \frac{\mathbf{1}_{\{Z=z\}} \ D}{p_D(1) \ p_{Z \mid D=1, W}(z \mid w)} - \frac{\mathbf{1}_{\{Z=z\}} (1-D)}{p_{Z \mid D=0}(z)p_D(0)} \ \frac{p_{D \mid X, Z=z}(1 \mid x)}{p_{D \mid X, Z=z}(0 \mid x)} \frac{p_{D \mid Z=z}(0)}{p_{D \mid Z=z}(1)} \frac{p_{Z \mid D=1}(z)}{p_{Z \mid D=1, W}(z \mid w)}   \right) \cdot m_0(X,z) \\
    & \quad - \left( \frac{D}{p_D(1)} - \frac{\mathbf{1}_{\{Z=z\}} \ D}{p_D(1) \ p_{Z \mid D=1, W}(z \mid w)} \right) \cdot E[m_0(X,z)|D=1,Z=z,W] \\
    & \quad + \frac{D}{p_D(1)} \gamma^B_0(z) \\
    &= \omega_1(z;O) \cdot \left\{ Y_1 - Y_0 - m_0(X,z) \right\} \\
    & \quad + \omega_2(z;O) \cdot \left\{ E[m_1(X,z)|D=1,Z=z,W] - E[m_0(X,z)|D=1,Z=z,W] \right\} \\
    & \quad - \frac{D}{p_D(1)} \cdot \gamma^B(z),
\end{align*}
where
\begin{align*}
    & \omega_1(z;O) = \frac{\mathbf{1}_{\{Z=z\}} \ D}{p_D(1) \ p_{Z \mid D=1, W}(z \mid w)} 
    - \frac{\mathbf{1}_{\{Z=z\}} (1-D)}{p_{Z \mid D=0}(z)\, p_D(0)} 
    \ \frac{p_{D \mid X, Z=z}(1 \mid x)}{p_{D \mid X, Z=z}(0 \mid x)} 
    \frac{p_{D \mid Z=z}(0)}{p_{D \mid Z=z}(1)} 
    \frac{p_{Z \mid D=1}(z)}{p_{Z \mid D=1, W}(z \mid w)}\\
    &= \frac{\mathbf{1}_{\{Z=z\}} D}{p_D(1)\, p_{Z \mid D=1, W}(z \mid w)} 
    - \frac{\mathbf{1}_{\{Z=z\}} (1-D)}{p_D(0)\, p_{Z \mid D=0}(z)} 
    \frac{p_{D \mid X, Z=z}(1 \mid x)}{p_{D \mid X, Z=z}(0 \mid x)} 
    \frac{p_{D \mid Z=z}(0)}{p_{D \mid Z=z}(1)} 
    \frac{p_{Z \mid D=1}(z)}{p_{Z \mid D=1, W}(z \mid w)} \\
    &= \frac{\mathbf{1}_{\{Z=z\}} D}{p_D(1)\, p_{Z \mid D=1, W}(z \mid w)} 
    - \frac{\mathbf{1}_{\{Z=z\}} (1-D)}{p_{Z \mid D=1, W}(z \mid w)} 
    \frac{p_{D \mid X, Z=z}(1 \mid x)}{p_{D \mid X, Z=z}(0 \mid x)} 
    \frac{p_{D \mid Z=z}(0)\, p_{Z \mid D=1}(z)}{p_D(0)\, p_{Z \mid D=0}(z)\, p_{D \mid Z=z}(1)} \\
    &= \frac{\mathbf{1}_{\{Z=z\}} D}{p_D(1)\, p_{Z \mid D=1, W}(z \mid w)} 
    - \frac{\mathbf{1}_{\{Z=z\}} (1-D)}{p_{Z \mid D=1, W}(z \mid w)} 
    \frac{p_{D \mid X, Z=z}(1 \mid x)}{p_{D \mid X, Z=z}(0 \mid x)} 
    \frac{\frac{p_{Z \mid D=0}(z)\, p_D(0)}{p_Z(z)} \cdot p_{Z \mid D=1}(z)}{p_D(0)\, p_{Z \mid D=0}(z)\, \frac{p_{Z \mid D=1}(z)\, p_D(1)}{p_Z(z)}} \\
    &= \frac{\mathbf{1}_{\{Z=z\}} D}{p_D(1)\, p_{Z \mid D=1, W}(z \mid w)} 
    - \frac{\mathbf{1}_{\{Z=z\}} (1-D)}{p_{Z \mid D=1, W}(z \mid w)} 
    \frac{p_{D \mid X, Z=z}(1 \mid x)}{p_{D \mid X, Z=z}(0 \mid x)\, p_D(1)}\\
    &= \frac{\mathbf{1}_{\{Z=z\}}}{p_{Z \mid D=1, W}(z \mid w)\, p_D(1)} 
    \left(
    D - (1-D)\, \frac{p_{D \mid X, Z=z}(1 \mid x)}{p_{D \mid X, Z=z}(0 \mid x)}
    \right),
\end{align*}
and
\begin{align*}
    \omega_2(z;O) = \frac{D}{p_D(1)} - \frac{\mathbf{1}_{\{Z=z\}} \ D}{p_D(1) \ p_{Z \mid D=1, W}(z \mid w)}.
\end{align*}

\newpage

\section{Asymptotic properties}
\label{app:asymptotic_prop}

This section closely follows arguments from \cite{kennedy2024semiparametric}'s Section 4.
Using the von Mises expansion and that $\phi(h;P)$: influence function (first-order derivative in distribution space) with $\int \phi(z;P)dP(z)=0$ and $\int \phi(z;P)^2dP(z) < \infty$, we can write:
\begin{align*}
    \gamma^{B}(\hat P) - \gamma^{B}(P) &= \int \phi(h;P)\, d(\hat P - P)(h) + R_2(\hat P,P) \\
    &= -\int \phi(o;\hat{P})dP(o)+R_2(\hat P,P)
\end{align*}
where $\hat P$ is the empirical distribution, $P$ the true distribution, $\gamma^{B}(P)$: target functional (here, $\text{BGATT}$), and $R_2(\hat P,P)$: second-order remainder.

Second equality indicates bias of the plug-in estimator $\hat{\gamma}^B_{\text{plug-in}}=\gamma^{B}(\hat P)$. 
We obtain the so-called one-step estimator by correcting for this bias: $\hat{\gamma}^B=\gamma^{B}(\hat P) + P_n\{\phi(O;\hat{P})\}$.
This can be rewritten as follows:
\begin{align*}
    \hat{\gamma}^B-\gamma^{B}(\hat P) &= P_n\{\phi(O;\hat{P})\} \\
    &= S^* + T_1 + T_2 
\end{align*}
where $S^*=(P_n-P)\{\phi(H;P)\}$, $T_1=(P_n-P)\{\phi(H;\hat{P}-\phi(Z;P)\}$ and $T_2=R_2(\hat{P},P) = \gamma^B(\hat{P})-\gamma(P) + \int \phi(h;\hat{P})dP(h)$ (see \cite{kennedy2024semiparametric}).

We want $R_1$ and $R_2$ to be of order $o_P(1/\sqrt{n})$ such that the sample average $S^*$ dominates, since then we have
\begin{align*}
    \sqrt{n}(\hat{\gamma}^B-\gamma) = \sqrt{n}S^*+o_P(1) \rightsquigarrow N(0,\text{var}\{\phi(H;P)\}
\end{align*}
meaning that $\hat{\gamma}^B$ is root-n consistent, asymptotically normal and efficient.

The empirical process term $T_1$ can be taken care of by cross-fitting, as long as the number of folds is finite and $\phi(h;\hat{P})$ is consistent ($\| \phi(h;\hat{P}_{-k})-\phi(h;P) \|=o_P(1)$), see for example Proposition 1 of \cite{kennedy2024semiparametric}. 

Now it's left to show that the remainder bias term $T_2=R_2(\hat{P},P)$ is negligible. 

\small
\begin{align*}
    & R_2(\hat{P},P) = \gamma(\hat{P}) - \gamma(P) + E_P\left[\omega_1(z;O) \{ Y_1 - Y_0 - \hat{m_0}(X,z)\}\right] \\
    & \quad + E_P\left[ \omega_2(z;O) E_{\hat{P}}\left[ \hat{m_1} - \hat{m_0} | D=1,Z=z,W\right] \right] - E_P \left[ \frac{D}{P(D=1)} \gamma(\hat{P}) \right] \\
    &= - \gamma(P) + E_P \left[ \frac{\mathbf{1}[Z=z]}{\hat{p}(Z=z|D=1,W)P(D=1)} \left\{ D-(1-D) \frac{\hat{P}(D=1|X,Z=z)}{1-\hat{P}(D=1|X,Z=z)} \{ Y_1 - Y_0 - \hat{m_0}(X,z)\} \right\} \right] \\
    &+ E_P\left[ \left\{ \frac{D}{P(D=1)} - \frac{\mathbf{1}[Z=z]D}{P(D=1) \hat{P}(Z=z|D=1,W)}\right\} E_{\hat{P}}\left[ \hat{m_1}-\hat{m_0}|D=1,Z=z,W\right]\right] \\
    &= - \gamma(P) + E_P \left[ \frac{\mathbf{1}[Z=z]}{\hat{p}(Z=z|D=1,W)P(D=1)} \left\{ D-(1-D) \frac{\hat{P}(D=1|X,Z=z)}{1-\hat{P}(D=1|X,Z=z)} \{ Y_1 - Y_0 - m_0(X,z)\} \right\} \right] \\
    &+ E_P \left[ \frac{\mathbf{1}[Z=z]}{\hat{p}(Z=z|D=1,W)P(D=1)} \left\{ D-(1-D) \frac{\hat{P}(D=1|X,Z=z)}{1-\hat{P}(D=1|X,Z=z)} \{ m_0 - \hat{m_0}\} \right\} \right] \\ 
    &+ E_P\left[ \left\{ \frac{D}{P(D=1)} - \frac{\mathbf{1}[Z=z]D}{P(D=1) \hat{P}(Z=z|D=1,W)}\right\} E_{\hat{P}}\left[ \hat{m_1}-\hat{m_0}|D=1,Z=z,W\right]\right] \\
    &= - \gamma(P) + E_P \left[ \frac{P(Z=z|D=1,W)}{\hat{P}(Z=z|D=1,W)} E_P\left[ m_1-m_0|D=1,W,Z=z \right] | D=1  \right] \\
    &+ E_P \left[ \frac{\mathbf{1}[Z=z]}{\hat{P}(Z=z|D=1,W)P(D=1)} \left\{m_0-\hat{m_0} \right\} \right. \\
    & \cdot \left. \left\{ E_P[D|X,Z=z]-\frac{\hat{P}(D=1|X,Z=z)}{1-\hat{P}(D=1|X,Z=z)} E_P[1-D|X,Z=z] \right\} \right] \\
    &+  E_P\left[ \left\{ \frac{D}{P(D=1)} - \frac{\mathbf{1}[Z=z]D}{P(D=1) \hat{P}(Z=z|D=1,W)}\right\} E_{\hat{P}}\left[ \hat{m_1}-\hat{m_0}|D=1,Z=z,W\right]\right] \\
    &+ E_P\left[ \left\{ \frac{D}{P(D=1)} - \frac{\mathbf{1}[Z=z]D}{P(D=1) \hat{P}(Z=z|D=1,W)}\right\} E_{\hat{P}}\left[ \hat{m_1}-\hat{m_0}|D=1,Z=z,W\right]\right] \\
    &= - \gamma(P) + E_P \left[ \frac{P(Z=z|D=1,W)}{\hat{P}(Z=z|D=1,W)} E_P\left[ m_1-m_0|D=1,W,Z=z \right] | D=1  \right] \\
    &+ E_P \Bigg[ \frac{\mathbf{1}[Z=z]}{\hat{P}(Z=z|D=1,W)P(D=1)} \left\{m_0-\hat{m_0} \right\}  \\
    & \cdot \left\{\frac{P(D=1|X,Z=z)\cdot(1-\hat{P}(D=1|X,Z=z))-\hat{P}(D=1|X,Z=z)(1-P(D=1|X,Z=z)}{1-\hat{P}(D=1|X,Z=z)}\right\}  \Bigg]\\
    &+ E_P\left[ \left\{ \frac{D}{P(D=1)} - \frac{\mathbf{1}[Z=z]D}{P(D=1) \hat{P}(Z=z|D=1,W)}\right\} E_{\hat{P}}\left[ \hat{m_1}-\hat{m_0}|D=1,Z=z,W\right]\right] \\
    &= E_P\left[ E_{\hat{P}} [\hat{m_1} -\hat{m_0} |D=1,Z=z,W]|D=1 \right] \\
    &- E_P \left[ \frac{P(Z=z|D=1,W)}{\hat{P}(Z=z|D=1,W)} E_{\hat{P}}[\hat{m}_1-\hat{m}_0|D=1,Z=z,W] |D=1  \right] \\
    & - E_P \left[ E_P[m_1-m_0|D=1,Z=z,W ] | D=1\right]\\
    &+ E_P \left[ \frac{P(Z=z|D=1,W)}{\hat{P}(Z=z|D=1,W)} E_P[m_1-m_0|D=1,W,Z=z] | D=1 \right] \\
    &+ E_P\left[ \frac{\mathbf{1}[Z=z]}{\hat{P}(Z=z|D=1,W) P(D=1)} \cdot \{ m_0 -\hat{m}_0 \} \frac{P(D=1|X,Z=z) - \hat{P}(D=1|X,Z=z)}{1-\hat{P}(D=1|X,Z=z)} \right] \\
    &= \underbrace{E_P \left[ \left\{ 1 - \frac{P(Z=z|D=1,W)}{\hat{P}(Z=z|D=1,W)} \right\} \cdot \left\{ E_{\hat{P}}[\hat{m}_1-\hat{m}_0|D=1,Z=z,W] - E_P[m_1-m_0|D=1,Z=z,W] \right\} |D=1 \right]}_{R_{2,A}} \\
    &+ \underbrace{E_P\left[ \frac{\mathbf{1}[Z=z]}{\hat{P}(Z=z|D=1,W)P(D=1)} \frac{\hat{P}(D=1|X,Z=z)}{1-\hat{P}(D=1|X,Z=z)} \cdot \{ \hat{m}_0 -m_0 \} \right]}_{R_{2,B}}
\end{align*}
\normalsize
where the second equality follows since $\gamma(\hat{P})-E_P \left[ \frac{D}{P(D=1)} \gamma(\hat{P}) \right]=0$, the third equality follows since $Y_1-Y_0-\hat{m}_0 = (Y_1-Y_0-m_0)+(m_0-\hat{m}_0)$, the fourth equality from Law of Iterated Expectations, the fifth equality follows from $E_P[D|X,Z=z]=P(D=1|X,Z=z)$ and $E[1-D|X,Z=z]=1-P(D=1|X,Z=z)$.

We now bound the remainder terms $R_{2,A}$ and $R_{2,B}$.

\subsection{Bound $R_{2,A}$}

\small
\begin{align*}
    R_{2,A} &= E_P \Bigg[   \frac{\hat{P}(Z=z|D=1,W) -P(Z=z|D=1,W)}{\hat{P}(Z=z|D=1,W)} \\
    & \cdot  \left\{ E_{\hat{P}}[\hat{m}_1-\hat{m}_0|D=1,Z=z,W] - E_P[m_1-m_0|D=1,Z=z,W] \right\} |D=1 \Bigg] \\
\end{align*}
\normalsize
It holds that 
\begin{align*}
    & \left| R_{2,A} \right| \\
    & \leq E_P \Bigg[  \left| \frac{\hat{P}(Z=z|D=1,W) -P(Z=z|D=1,W)}{\hat{P}(Z=z|D=1,W)} \right| \\
    & \cdot \left| \left\{ E_{\hat{P}}[\hat{m}_1-\hat{m}_0|D=1,Z=z,W] - E_P[m_1-m_0|D=1,Z=z,W] \right\} \right| |D=1 \Bigg] 
\end{align*}
Assume that $\hat{P}(Z=z|D=1,W) \geq \varepsilon_A$ such that 
\begin{align*}
    \left| \frac{\hat{P}(Z=z|D=1,W) -P(Z=z|D=1,W)}{\hat{P}(Z=z|D=1,W)} \right| \leq \frac{1}{\varepsilon_A} \left|\hat{P}(Z=z|D=1,W) -P(Z=z|D=1,W)\right|
\end{align*}
Then, by Cauchy-Schwarz:
\begin{align*}
    & \left| R_{2,A} \right| \\
    & \leq \frac{1}{\varepsilon_A} E_P \Bigg[  \left|\hat{P}(Z=z|D=1,W) -P(Z=z|D=1,W)\right| \\
    & \cdot \left|  E_{\hat{P}}[\hat{m}_1-\hat{m}_0|D=1,Z=z,W] - E_P[m_1-m_0|D=1,Z=z,W]  \right| |D=1 \Bigg] \\
    & \leq \frac{1}{\varepsilon_A} \| \hat{P}(Z=z|D=1,W) -P(Z=z|D=1,W) \|_{L^2(P \mid D=1)}\\
    &\cdot \|  E_{\hat{P}}[\hat{m}_1-\hat{m}_0|D=1,Z=z,W] - E_P[m_1-m_0|D=1,Z=z,W)] \|_{L^2(P \mid D=1)}
\end{align*}

So for instance, if $\| \hat{P}(Z=z|D=1,W) -P(Z=z|D=1,W) \|_{L^2(P \mid D=1)}=o_P(n^{-1/4})$ and $ \|  E_{\hat{P}}[\hat{m}_1-\hat{m}_0|D=1,Z=z,W] - E_P[m_1-m_0|D=1,Z=z,W]  \|_{L^2(P \mid D=1)}=o_P(n^{-1/4})$, then $R_{2,A}=o_P(1/\sqrt{n})$, meaning that machine learners can be used for estimation of these nuisance functions.

\subsection{Bound $R_{2,B}$}

\begin{align*}
    R_{2,B} = E_P \left[ \frac{\mathbf{1}[Z=z]}{\hat{P}(Z=z|D=1,W)P(D=1)} \frac{\hat{P}(D=1|X,Z=z)-P(D=1|X,Z=z)}{1-\hat{P}(D=1|X,Z=z)} \cdot \{ \hat{m}_0 -m_0 \}\right]
\end{align*}

Then it holds that
\begin{align*}
    R_{2,B} \leq E_P \left[\frac{\mathbf{1}[Z=z]}{\hat{P}(Z=z|D=1,W)P(D=1)}  \left|  \frac{\hat{P}(D=1|X,Z=z)-P(D=1|X,Z=z)}{1-\hat{P}(D=1|X,Z=z)} \right| \cdot \left| \hat{m}_0 -m_0 \right|  \right]
\end{align*}

Now assume that $1-\hat{P}(D=1|X,Z=z)\geq \varepsilon_B$ and the earlier assumption that $\hat{P}(Z=z|D=1,W) \geq \varepsilon_A$, such that
\begin{align*}
    & \frac{1}{\hat{P}(Z=z|D=1,W)} \left| \frac{\hat{P}(D=1|X,Z=z)-P(D=1|X,Z=z)}{1-\hat{P}(D=1|X,Z=z)} \right| \\
    &\leq \frac{1}{\varepsilon_A \varepsilon_B} \left| \hat{P}(D=1|X,Z=z)-P(D=1|X,Z=z) \right|.
\end{align*}

Then, using Cauchy Schwarz: 
\begin{align*}
    R_{2,B} &\leq \frac{1}{\varepsilon_A \varepsilon_B P(D=1)} E_P \left[ \mathbf{1}[Z=z] \cdot \left| \hat{P}(D=1|X,Z=z)-P(D=1|X,Z=z) \right| \cdot \left| \hat{m}_0 -m_0 \right|  \right] \\
    &= \frac{P(Z=z)}{\varepsilon_A \varepsilon_B P(D=1)} E_P \left[ \left| \hat{P}(D=1|X,Z=z)-P(D=1|X,Z=z) \right| \cdot \left| \hat{m}_0 -m_0 \right| |Z=z \right]\\
    &\leq \frac{P(Z=z)}{\varepsilon_A \varepsilon_B P(D=1)} \cdot
    \| \hat{P}(D=1|X,Z=z)-P(D=1|X,Z=z) \|_{L^2(P \mid Z=z)} \cdot
    \| \hat{m}_0 -m_0 \|_{L^2(P \mid Z=z)}
\end{align*}

So for instance, if $\| \hat{P}(D=1|X,Z=z)-P(D=1|X,Z=z) \|_{L^2(P \mid D=1)}=o_P(n^{-1/4})$ and $ \| \hat{m}_0 -m_0 \|_{L^2(P \mid D=1)}=o_P(n^{-1/4})$, then $R_{2,B}=o_P(1/\sqrt{n})$, meaning that machine learners can be used for estimation of these nuisance functions.

\end{appendices}

\end{document}